\newtheorem{thm}{Theorem}
\newtheorem{lem}{Lemma}
\newtheorem{cor}{Corollary}
\newtheorem{prop}{Proposition}
\newtheorem{defn}{Definition}
\newtheorem{example}{Example}
\newtheorem{remark}{Remark}
\newcounter{alg}
\newlength{\lefttab}
\newlength{\numberoffset}
\begin{document}
\title{Binary sequences with length $n$ and nonlinear complexity not less than $n/2$}

\author{
Sicheng Liang
\thanks{S. Liang, X. Zeng and Z. Sun are with Faculty of Mathematics and Statistics, Hubei Key Laboratory of
Applied Mathematics, Hubei University, Wuhan 430062, Hubei, China. Email:
sichengliang@aliyun.com, xzeng@hubu.edu.cn, zmsun@hubu.edu.cn},
Xiangyong Zeng
,
Zibi Xiao
\thanks{Z. Xiao is with College of Science, Wuhan University of Science and Technology, Wuhan 430081, Hubei, China. Email: xiaozibi@wust.edu.cn.},
Zhimin Sun
}

\date{}



\maketitle
\begin{quote}
{\small {\bf Abstract:}} In this paper, the construction of finite-length binary sequences whose nonlinear
complexity is not less than half of the length is investigated. By characterizing the structure of the sequences, an algorithm is proposed to generate all binary sequences with length $n$ and nonlinear complexity $c_{n}\geq n/2$, where $n$ is an integer larger than $2$. Furthermore, a formula is established to calculate the exact number
of these sequences. The distribution of nonlinear complexity for these sequences is thus completely determined.

{\small {\bf Index Terms:}} Binary sequence, finite-length sequence, nonlinear complexity, distribution

\end{quote}
\section{Introduction}
Pseudorandom sequences generated by feedback shift registers (FSRs) are widely used in secure
communications \cite{DING1,GOLOMB,MENE}. In cryptographic applications, the sequences are required to be random
or unpredictable. The linear complexity is a classical criterion for
evaluating the unpredictability of a sequence. It measures the length of the shortest
FSRs with linear feedback functions that can generate the sequence. To resist the attack from the
application of the Berlekamp-Massey algorithm \cite{JLM}, the sequences used in cipher systems
should have large linear complexity. A sequence with large linear complexity may
be generated by a much shorter FSR if the shift registers with nonlinear feedback functions are considered. Removal of the restriction on the degree of feedback functions gives the notion of nonlinear complexity, also referred to as the maximum order complexity and the nonlinear span, of a sequence \cite{Jansen1,Jansen2}. Name-wise we use the name nonlinear complexity in this paper since it is more easily seen as the counterpart of linear complexity.

The eSTREAM project (2004-2008) was a research project of European cryptographers to identify a portfolio of new stream ciphers. Some hardware-oriented finalists of the eSTREAM, such as Trivium \cite{TRIVI} and Grain \cite{GRAIN}, were designed based on nonlinear feedback shift registers. This is an important reason that the nonlinear complexity, as a criterion for assessing the randomness of sequences, has attracted more attention in recent years.
Many authors concentrated on finding the shortest FSRs with (quadratic) nonlinear feedback functions
that generate a given sequence \cite{CHAN,LKK2,RK,RPNN,GONG1}, as well as establishing theoretical bounds of nonlinear
complexity \cite{GUTI,KHAC,HX}. 
To further measure the randomness of sequences, an approximate probability distribution
for the nonlinear complexity of random binary sequences was derived in \cite{MANN}. The nonlinear complexity
distribution of finite-length binary sequences and periodic binary sequences were investigated in \cite{RK} and \cite{MYKK}, respectively.

Constructing periodic or finite-length sequences with large nonlinear complexity is another important topic in the research field of nonlinear complexity. In reference \cite{PR},
two methods were proposed to construct periodic binary sequences with given linear complexity and maximum
nonlinear complexity. Recently, recursive approaches were proposed to generate all binary
sequences of period $n$ with nonlinear complexity $n-1$ and $n-2$ in \cite{SZLH} and \cite{XZLY}, respectively.
By applying a combinatorial method, all $m$-ary sequences of length $n$ and nonlinear complexity $n-j$,
where $2\leq j\leq4$ and $n\geq 2j$,
were characterized in \cite{PZS, YILIN}. In addition, several constructions of finite-length sequences with large nonlinear
complexity profile from function fields were studied in \cite{CAST,LXY,HX}.

The purpose of this paper is to investigate finite-length binary sequences of length $n$ and large nonlinear complexity,
where $n$ is a positive integer larger than $2$. In particular, we focus on the sequences with nonlinear complexity $c_n\geq n/2$.
For such a sequence, it is proved in this paper that there is exactly one pair of identical subsequences of length $c_n-1$ with different successors.
This is a refinement of Proposition 3.19 in \cite{Jansen1} (see Lemma \ref{lem1} (iii) in this paper).
With this interesting property, we can divide all these sequences
into $n-c_n$ disjoint classes according to the \textit{distance} $d$ of the unique pair of identical subsequences,
where $1\leq d\leq n-c_n$. Then we characterize the structure of the sequences in the class with maximum distance $d=n-c_n$.
By studying the nonlinear complexity of left (right) extension
sequences (see Subsection \ref{Sec3.2} for the definition), we find an extension method to
produce all sequences in the classes having distance $d$ with $1\leq d\leq n-c_n-1$.
As a result, we propose an algorithm to generate all sequences of length $n$ and
nonlinear complexity $c_n$ with $c_n\geq n/2$. A formula
to calculate the exact number of these sequences is also established.

There were few constructions of finite-length sequences with large nonlinear complexity in the past.
Moreover, the distribution of the nonlinear complexity of a random sequence is difficult to calculate exactly.
An approximate distribution for nonlinear complexity was given in \cite{MANN}, but the accuracy of the approximation for long sequences is unknown.
In this paper, we not only give a direct construction for all binary sequences of length $n$ with nonlinear complexity $c_n$, where $n/2\leq c_n\leq n-1$, but also present a theoretical result on the number of these sequences.
By applying the established formula, for a very large integer $n$, the exact number of these sequences
 can be calculated, and the exact distribution for nonlinear complexity of binary sequences under the condition $c_n\geq n/2$ is determined. 

The remainder of this paper is organized as follows. Section \ref{preliminaries} introduces some necessary notations, definitions and properties of the nonlinear complexity. In Section \ref{Sec3}, some useful properties of the nonlinear
complexity of finite-length binary sequences with length $n$ and nonlinear
complexity $c_n\geq n/2$ are presented. In Section \ref{Sec4}, the construction of these sequences is investigated and an algorithm is then proposed to generate them.
In Section \ref{Sec5}, a formula for the exact number of all these sequences is established and
a distribution property of the nonlinear complexity is obtained. Section \ref{Sec6} concludes the study.

\section{Preliminaries}\label{preliminaries}
Throughout this paper, let $q$ be an integer with $q\geq2$ and $\mathbb{Z}_{q}=\{0,1,\cdots,q-1\}$ be the residue ring modulo $q$. Let $\textbf{s}=(s_{0},s_{1},s_{2},\cdots)$ be a sequence over $\mathbb{Z}_q$.
The definition and some useful properties of nonlinear complexity of a sequence $\textbf{s}$ are recalled below.
\begin{defn}(\cite{Jansen1})\label{nlcc}  The nonlinear complexity of a sequence $\textbf{s}$, denoted by $nlc(\textbf{s})$, is the length of the shortest feedback shift registers that can generate the sequence $\textbf{s}$.
\end{defn}
The following basic notations will be used throughout this paper.
\begin{itemize}
  \item $\textbf{s}_{n}=(s_{0},s_{1},\cdots,s_{n-1})$ denotes a finite-length sequence of length $n$ over $\mathbb{Z}_q$;

  \item $S_{i}^{i+m-1}=(s_{i},\cdots,s_{i+m-1})$ denotes the $i$th subsequence of length $m$ in the sequence $\textbf{s}_{n}$;

  \item $(\textbf{s}_{n})^{t}=(s_{0},s_{1},\cdots,s_{n-1})^{t}$ denotes the concatenation of $t$ copies of the sequence $\textbf{s}_{n}$,
  and $\textbf{s}_{n}\textbf{e}_{m}$ denotes the concatenation of two sequences $\textbf{s}_{n}$ and $\textbf{e}_{m}$;

  \item $Z_{q}(n,c_{n})$ denotes the set of sequences of length $n$ and nonlinear complexity $c_{n}$ over $\mathbb{Z}_{q}$;

  \item $|D|$ denotes the cardinality of a set $D$.

\end{itemize}

\begin{lem}(\cite{Jansen1})\label{lem1}
Let $\textbf{s}_{n}=(s_{0},s_{1},\cdots, s_{n-1})$ be a sequence of length $n$ over $\mathbb{Z}_{q}$.

{\rm (i) }
   If $l$ is the length of the longest subsequences of $\textbf{s}_{n}$ that occur at least twice with different successors, then $nlc(\textbf{s}_{n})=l+1$.

{\rm (ii) }
For any sequence $\textbf{s}_{n}$, we have $0\leq nlc(\textbf{s}_{n})\leq n-1$. Moreover, $nlc(\textbf{s}_{n})=0$ iff $\textbf{s}_{n}$ has the form $(\alpha, \alpha, \cdots, \alpha)$,
where $\alpha \in \mathbb{Z}_{q}$; $nlc(\textbf{s}_{n})=n-1$ iff $\textbf{s}_{n}$ has the form $(\alpha, \alpha, \cdots, \alpha,\beta)$, where $\alpha, \beta \in\mathbb{Z}_{q}$ and $\alpha\neq \beta$.

{\rm (iii) }
If $nlc(\textbf{s}_{n})\geq \frac{n}{2}$, then its subsequences of length $nlc(\textbf{s}_{n})$
    are all distinct.
\end{lem}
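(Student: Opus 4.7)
The plan is to argue by contradiction. Suppose two length-$c$ subsequences coincide, i.e., $S_i^{i+c-1} = S_j^{j+c-1}$ for some $0 \leq i < j \leq n-c$, and set $d := j - i$. Note that $j + c - 1 \leq n - 1$ gives $d \leq n - c$, and the hypothesis $c \geq n/2$ then yields $d \leq c$. I would choose the pair $(i,j)$ with $d$ minimal among all such repeats; additionally I may assume $i$ is as small as possible subject to this choice of $d$ (equivalently, $i = 0$ or $s_{i-1} \neq s_{j-1}$).

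My first step is forward periodicity. By part (i), $nlc(\textbf{s}_n) \leq c$ means no length-$c$ subsequence of $\textbf{s}_n$ admits two distinct successors, so $S_i^{i+c-1} = S_j^{j+c-1}$ forces $s_{i+c+t} = s_{j+c+t}$ for all $t \geq 0$ with $j+c+t \leq n-1$. Combined with the initial equality, this yields
\[
s_m = s_{m+d} \qquad (i \leq m \leq n - 1 - d),
\]
so the tail $s_i s_{i+1} \cdots s_{n-1}$ is purely periodic with period $d$, entirely determined by the block $v := (s_i, \ldots, s_{i+d-1})$. Minimality of $d$ makes $v$ aperiodic, since any proper period of $v$ would propagate to the whole tail and yield a strictly smaller shift giving equal length-$c$ subsequences.

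Since $nlc(\textbf{s}_n) = c$, part (i) furnishes positions $0 \leq a < b \leq n - c$ with $S_a^{a+c-2} = S_b^{b+c-2}$ but $s_{a+c-1} \neq s_{b+c-1}$. I would derive a contradiction by analysing where $a, b$ sit relative to the pre-period $[0, i-1]$ and the periodic region $[i, n-1]$. If both $a, b \geq i$, then either $a \equiv b \pmod{d}$, in which case periodicity forces $s_{a+c-1} = s_{b+c-1}$ (contradiction), or $a \not\equiv b \pmod{d}$, in which case $S_a^{a+c-2} = S_b^{b+c-2}$ imposes $c - 1$ shift-by-$(b - a)$ equalities on $v$. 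Provided $c > d$, so $c - 1 \geq d$, these cover every residue modulo $d$ and force $v$ to admit a proper period, contradicting aperiodicity. The degenerate case $c = d$ forces $n = 2c$ and $\textbf{s}_n = (\textbf{u}, \textbf{u})$, for which I would verify directly that $nlc(\textbf{s}_n) < c$ (the only length-$(c-1)$ repeats are period-shifts, whose successors match).

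If $a < i$, then $S_a^{a+c-2}$ straddles the pre-period and the periodic region. Because $c \geq n/2$ gives $i \leq n - c - d \leq c - d$, the subsequence $S_a^{a+c-2}$ has at least $c - 1 - (i-a) \geq d - 1$ positions inside the periodic part. Matching those positions to $S_b^{b+c-2}$ again yields shift-by-$(b - a)$ equalities on $v$ which, in all but a narrow boundary configuration ($a = 0$, $n = 2c$, $i = c - d$), cover every residue modulo $d$ and contradict the aperiodicity of $v$. The remaining boundary configuration is dispatched by combining $s_{a+c-1} \neq s_{b+c-1}$ with the aperiodicity of $v$ and the left-maximal choice of $i$ to produce a direct contradiction with the minimality of $d$.

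The main technical obstacle I expect is the straddle subcase $a < i \leq b$: one must verify carefully that the periodic overlap supplies enough shift equalities to pin down a smaller period of $v$, and dispose of the narrow edge case in which coverage just barely fails. The hypothesis $c \geq n/2$ is used twice: to guarantee $d \leq c$ (so each length-$(c-1)$ periodic subsequence contains a full period) and to force the pre-period to be short, $i \leq c - d$, so that the straddle overlap is substantial.
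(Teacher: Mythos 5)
The paper itself does not prove this lemma (it is quoted from \cite{Jansen1}), so I can only judge your proposal on its own terms. Note first that you only address part (iii), taking part (i) as given; parts (i) and (ii) still need their own (short) arguments. For part (iii), your main line --- forward periodicity of the tail $s_i,\dots,s_{n-1}$ with period $d=j-i$ deduced from a repeated length-$c$ state, translation of the complexity-defining pair $(a,b)$ into statements about the period block $v$, and a residue-covering count that uses $c\geq n/2$ to get $d\leq c$ and $i\leq c-d$ --- is sound, and most of the case analysis checks out. But two of your cases rest on justifications that do not hold up. In the degenerate case $\textbf{s}_{2c}=(\textbf{u},\textbf{u})$ you assert that ``the only length-$(c-1)$ repeats are period-shifts''; this is false in general ($\textbf{u}$ may itself contain internal repeats), so $nlc(\textbf{s}_{2c})<c$ is not established by what you wrote. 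In the boundary configuration of the straddle case you propose to combine $s_{a+c-1}\neq s_{b+c-1}$ with the aperiodicity of $v$ and the left-maximal choice of $i$; but aperiodicity only tells you that \emph{not all} residues $r$ satisfy $v_r=v_{(r+\delta)\bmod d}$, which is exactly consistent with the situation you are in ($d-1$ residues verified, one residue failing), so no contradiction follows from the ingredients you list.

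Both gaps are closed by one observation you never invoke, and which in fact collapses your entire case analysis. Claim: if $v_r=v_{(r+\delta)\bmod d}$ for every residue $r\in\mathbb{Z}_d$ except possibly one residue $r_0$, then also $v_{r_0}=v_{(r_0+\delta)\bmod d}$; indeed, following the orbit of $r_0$ under $r\mapsto r+\delta$ in $\mathbb{Z}_d$, which returns to $r_0$ after $L=d/\gcd(d,\delta)$ steps, the chain $v_{r_0+\delta}=v_{r_0+2\delta}=\cdots=v_{r_0+L\delta}=v_{r_0}$ uses only equalities at residues different from $r_0$. Now your own estimates show that the $d-1$ consecutive positions $a+c-d,\dots,a+c-2$ always lie in the periodic region (since $a+c-d\geq c-d\geq i$) and satisfy $s_m=s_{m+\delta}$; their residues are exactly $\mathbb{Z}_d\setminus\{r_0\}$ with $r_0=(a+c-1-i)\bmod d$. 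The claim then gives $s_{a+c-1}=s_{b+c-1}$, contradicting the choice of $(a,b)$ --- in every case at once, with no minimality of $d$, no aperiodicity of $v$, and no splitting on where $a$ and $b$ sit relative to $i$. I recommend restructuring the proof around this single orbit-chasing step.
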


Let $\textbf{s}_{n}=(s_{0},s_{1},\cdots,s_{n-1})$ and $nlc(\textbf{s}_{n})=c_{n}$. Then the $i$th state vector of the FSR which generates the sequence $\textbf{s}_{n}$, denoted by $\underline{S}^{i}$, is exactly the $i$th subsequence of length $c_n$ in $\textbf{s}_{n}$, i.e., $\underline{S}^{i}=S_{i}^{i+c_{n}-1}=(s_i, s_{i+1},\cdots,s_{i+c_n-1})$. The so-called \textit{state sequence} of $\textbf{s}_{n}$ introduced by Jansen is defined as follows.
\begin{defn}(\cite{Jansen1})\label{defn1}
The state sequence $\underline{S}$ of $\textbf{s}_{n}$ is the sequence of states through which the FSR cycles when generating the sequence $\textbf{s}_{n}$, i.e., $\underline{S}=(\underline{S}^{0},\underline{S}^{1},\cdots,\underline{S}^{n-c_{n}})$.
\end{defn}

The most general form of the state sequence of $\textbf{s}_n$ is
\begin{eqnarray}\label{state}
\underline{S}=(\underbrace{\underline{S}^{0},\underline{S}^{1},\cdots,\underline{S}^{h-1}}_{\text{prefix}} ,\overbrace{\underbrace{\underline{S}^{h},\underline{S}^{h+1},\cdots,\underline{S}^{h+p-1}}_{\text{period}},\cdots,\underline{S}^{h+p-1} }^{\text{$mp$}} ,\underbrace{\underline{S}^{h},\underline{S}^{h+1},\cdots,\underline{S}^{h+k-1}}_{\text{suffix}}),
\end{eqnarray}
where the integers $h$, $p$, $m$ and $k$ satisfy $h\geq0$, $p\geq 1$, $m\geq0$, and $0\leq k\leq p-1$.
In particular, $m=0$ means that all the $n-c_n+1$ states of $\textbf{s}_n$ are distinct. If $m\geq1$,
then all the states appear periodically except for the first $h$ states, which implies that the first $h+p$ states in the state sequence $\underline{S}$ are all distinct. The parameters clearly satisfy the following equation \cite{Jansen1}
\begin{eqnarray}\label{Len1}
n=c_{n}+h+mp+k-1.
\end{eqnarray}

\begin{lem}(\cite{Jansen1})\label{lem0}
Let $\textbf{s}_{n}=(s_{0},s_{1},\cdots, s_{n-1})$ be a
sequence of length $n$ over $\mathbb{Z}_{q}$, and $\textbf{s}_{n+1}$ be the sequence of length $n+1$
obtained by extending $\textbf{s}_{n}$ with a term $s_n\in\mathbb{Z}_{q}$.
Let $nlc(\textbf{s}_{n})=c_n$ and $nlc(\textbf{s}_{n+1})=c_{n+1}$.

{\rm (i) }
    If $c_{n}\geq \frac{n}{2}$, then $c_{n+1}=c_{n}$, that is, the nonlinear complexity remains
    the same as $\textbf{s}_{n}$ is extended with a term $s_n$, regardless of which element of $\mathbb{Z}_{q}$ it is.

{\rm (ii) }
    If $c_{n}< \frac{n}{2}$, then extending the sequence with a term $s_n$ may increase
    the value of the nonlinear complexity to a maximum of $n-c_{n}$,
    i.e., $c_{n}\leq c_{n+1}\leq n-c_{n}$.



{\rm (iii) }
$c_{n+1}> c_{n}$ if and only if 
$\textbf{s}_{n}$ is extended such that the uniqueness of successor states is violated, that is, $S_{n-c_{n}}^{n-1}=S_{n-c_n-p}^{n-p-1}$ and $s_{n}\neq s_{n-p}$. In addition, $c_{n+1}-c_n=(m-1)p+k$, where $m$, $p$ and $k$ are the parameters in (\ref{state}) and $m\geq 1$.

\end{lem}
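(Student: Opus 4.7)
The plan is to treat the three parts in the order (i), (iii), (ii), relying on Lemma~\ref{lem1} and the state-sequence structure described in (\ref{state}).

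For part (i), I would invoke Lemma~\ref{lem1}(iii): under $c_n \geq n/2$, all length-$c_n$ subsequences of $\textbf{s}_n$ are pairwise distinct. Extending $\textbf{s}_n$ by $s_n$ creates exactly one new length-$c_n$ subsequence, located at position $n - c_n + 1$, and this new occurrence has no successor in $\textbf{s}_{n+1}$ because it is the last length-$c_n$ window. Consequently, no length-$c_n$ subsequence in $\textbf{s}_{n+1}$ (and a fortiori none of any longer length) can appear twice with different successors, so by Lemma~\ref{lem1}(i) we have $c_{n+1} \leq c_n$. Combined with the trivial bound $c_{n+1} \geq c_n$ (any witness present in $\textbf{s}_n$ is still present in $\textbf{s}_{n+1}$), equality follows.

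For part (iii), I would analyze both directions. Assuming $c_{n+1} > c_n$, a witness of length $l \geq c_n$ with different successors in $\textbf{s}_{n+1}$ must have its second occurrence ending exactly at position $n$, else both successors lie in $\textbf{s}_n$ and contradict $nlc(\textbf{s}_n) = c_n$. At level $l = c_n$ this says $\underline{S}^{n-c_n}$ equals some earlier state $\underline{S}^{j}$ with $j < n - c_n$ and $s_{j+c_n} \neq s_n$. From the cyclic structure in (\ref{state}) the closest such $j$ is $n - c_n - p$, giving $s_n \neq s_{n-p}$ and forcing $m \geq 1$; conversely this condition directly produces a length-$c_n$ conflict, so $c_{n+1} > c_n$. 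For the exact value, I would extend the matched block to the left: the periodicity $s_t = s_{t+p}$ for $t \geq h$ (equivalent to the state-sequence periodicity) implies $S_{h+p}^{n-1} = S_h^{n-p-1}$, a match of length $n - h - p$ whose two successors $s_n$ and $s_{n-p}$ differ. One more step to the left would require $s_{h-1} = s_{h+p-1}$, which by periodicity would force $\underline{S}^{h-1} = \underline{S}^{h+p-1}$ and contradict the minimality of the prefix length $h$. Hence $c_{n+1} = n - h - p + 1$, and substituting (\ref{Len1}) yields $c_{n+1} - c_n = (m-1)p + k$.

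For part (ii), the lower bound is immediate from persistence of witnesses. For the upper bound, the case $c_{n+1} = c_n$ is trivial since $c_n < n/2 \leq n - c_n$. When $c_{n+1} > c_n$, applying part (iii) reduces the desired inequality $c_{n+1} \leq n - c_n$ to $h + p \geq c_n + 1$, i.e., the length-$c_n$ state sequence of $\textbf{s}_n$ admits at least $c_n + 1$ distinct states. This step is the main obstacle I anticipate. The key idea is that a length-$(c_n - 1)$ witness for $nlc(\textbf{s}_n) = c_n$ cannot lie entirely within the periodic portion of $\textbf{s}_n$ (positions $\geq h$): inside that portion any two equal length-$(c_n - 1)$ subsequences are separated by a multiple of $p$ and so have equal successors by periodicity. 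Hence the witness is forced to involve the prefix, and a careful bookkeeping of the positional constraints together with $c_n < n/2$ is what produces the required lower bound on the number of distinct states.
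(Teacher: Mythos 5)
First, a point of reference: the paper does not prove this lemma at all --- it is quoted from Jansen's thesis \cite{Jansen1} and used as a black box --- so there is no in-paper proof to compare against; your attempt has to stand on its own. Part (i) is correct and complete. The equivalence in part (iii) and the identification of the witness $S_{h}^{n-p-1}=S_{h+p}^{n-1}$ are also sound, though your maximality argument there is incomplete: ruling out a one-step leftward extension of that \emph{particular} pair does not rule out a longer witness sitting elsewhere. The clean fix is to note that any witness of length $l\geq c_n$ in $\textbf{s}_{n+1}$ must have its later occurrence ending at position $n-1$, and its two starting positions carry equal states, which forces the earlier start to be $\geq h$ and the separation to be a positive multiple of $p$; hence $l\leq n-h-p$.

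The genuine gap is in part (ii). You correctly reduce the upper bound $c_{n+1}\leq n-c_n$ to the inequality $h+p\geq c_n+1$, but the ``key idea'' you offer for it is false: it is not true that two equal subsequences of length $c_n-1$ lying in the periodic portion must be separated by a multiple of $p$, nor that the witness for $nlc(\textbf{s}_n)=c_n$ must involve the prefix. Take $\textbf{s}_{12}=(0,0,1,0)^3$. Here $nlc(\textbf{s}_{12})=3<6$, the state sequence is periodic from the very start ($h=0$, $p=4$), so there is no prefix at all, and the length-$2$ witness $(0,0)$ occurs at positions $3$ and $4$ (separation $1$, not a multiple of $4$) with successors $s_5=0\neq s_6=1$ --- entirely inside the periodic portion. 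So the inequality $h+p\geq c_n+1$ (which does hold here, with equality) cannot be obtained by the route you describe; it requires a genuine periodicity argument (a Fine--Wilf-type analysis of how the local period $i_2-i_1$ of the witness interacts with the global period $p$, using $c_n<n/2$ to guarantee enough overlap), and that is precisely the step you have left as ``careful bookkeeping.'' As written, part (ii)'s upper bound is not proved.
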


\section{Some useful properties}\label{Sec3}

In this section, we establish a necessary condition for a finite-length
sequence $\textbf{s}_{n}$ with $n\geq 3$ and $nlc(\textbf{s}_{n})\geq\frac{n}{2}$
and obtain some interesting
properties of the nonlinear complexity of left extension sequences. Note that all the sequences
considered in the rest of this paper are over $\mathbb{Z}_{2}$.

\subsection{Characterizations on subsequences of length $nlc(\textbf{s}_{n})-1$}

According to Lemma \ref{lem1} (i), the nonlinear complexity of $\textbf{s}_{n}$ is determined by
its subsequences of length $nlc(\textbf{s}_{n})-1$ with different successors. We first present some properties
of these significant subsequences in a sequence $\textbf{s}_{n}$ with nonlinear
complexity $nlc(\textbf{s}_{n})\geq\frac{n}{2}$.

\begin{prop}\label{lem2} Let $\textbf{s}_{n}=(s_{0},s_{1},\cdots, s_{n-1})$ be a sequence over $\mathbb{Z}_2$ and $nlc(\textbf{s}_{n})=c_{n}\geq \frac{n}{2}$. Then any subsequence of length $c_{n}-1$ with a successor occurs at most twice in $\textbf{s}_{n}$.
\end{prop}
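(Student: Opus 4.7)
My plan is to argue by contradiction, combining a pigeonhole argument on the binary alphabet $\mathbb{Z}_2$ with the uniqueness statement in Lemma \ref{lem1}(iii).

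Suppose, for contradiction, that some subsequence $T$ of length $c_n - 1$ occurs at least three times in $\textbf{s}_n$, each occurrence possessing a successor. Then there exist indices $0 \leq i_1 < i_2 < i_3 \leq n - c_n$ with $S_{i_j}^{i_j + c_n - 2} = T$ for $j = 1, 2, 3$, and each index $i_j$ carries a successor $s_{i_j + c_n - 1} \in \mathbb{Z}_2 = \{0,1\}$.

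Since the three successors all lie in a two-element set, the pigeonhole principle forces at least two of them to coincide: pick indices $i_a < i_b$ among $\{i_1, i_2, i_3\}$ with $s_{i_a + c_n - 1} = s_{i_b + c_n - 1}$. Appending this common value to the identical copies of $T$ then yields two equal subsequences of length $c_n$, namely $S_{i_a}^{i_a + c_n - 1} = S_{i_b}^{i_b + c_n - 1}$.

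This, however, contradicts Lemma \ref{lem1}(iii), which guarantees that under the hypothesis $c_n \geq n/2$ all length-$c_n$ subsequences of $\textbf{s}_n$ are pairwise distinct. Hence no subsequence of length $c_n - 1$ can appear with a successor more than twice, proving the proposition. The argument is genuinely short; I do not anticipate a real obstacle, only some care in ensuring the three occurrence indices indeed fall in the range $[0, n - c_n]$ so that the successors actually exist (which is built into the phrase ``with a successor''). It is worth noting that this proof is tight to binary: the same strategy over $\mathbb{Z}_q$ would only bound the number of occurrences by $q$, which is why the binary restriction announced at the start of Section \ref{Sec3} is essential.
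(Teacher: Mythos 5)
Your proof is correct and follows essentially the same route as the paper's: assume three occurrences with successors, apply the pigeonhole principle over $\mathbb{Z}_2$ to find two occurrences with the same successor, and derive a contradiction with Lemma \ref{lem1}~(iii). The paper merely adds an explicit (and trivial) remark about the case $n-c_n=1$, which your argument also covers.
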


\begin{proof}
The case $n-c_n=1$ is trivial since there are only two subsequences of length $c_n-1$ with a successor in $\textbf{s}_n$. For the case $n-c_n>1$, suppose that there exists a subsequence of length $c_{n}-1$ with a successor that occurs more than twice in $\textbf{s}_n$, without loss of generality, this subsequence can be assumed as
\begin{center}
$S_{i}^{i+c_{n}-2} = S_{j}^{j+c_{n}-2}=S_{t}^{t+c_{n}-2}$,
\end{center}
where the integers $i$, $j$ and $t$ satisfy $0\leq i<j<t\leq n-c_{n}$.

Since $\textbf{s}_{n}$ is a sequence over $\mathbb{Z}_2$, among these three identical subsequences of length $c_{n}-1$, two of them must be followed by an identical term. It means that there exist two identical subsequences of length $c_{n}$ in $\textbf{s}_{n}$, a contradiction to Lemma \ref{lem1} (iii). The claimed statement thus follows.
\end{proof}

From Lemma \ref{lem1} (i), there is one pair of identical subsequences of length $nlc(\textbf{s}_{n})-1$ with different successors in $\textbf{s}_{n}$
if $nlc(\textbf{s}_{n})>1$. In fact, there is exactly one pair of such subsequences in $\textbf{s}_{n}$ when $nlc(\textbf{s}_{n})\geq\frac{n}{2}$.
To prove this, we need the following lemma.

\begin{lem}\label{lem3}
Let $\textbf{s}_{n}=(s_{0},s_{1},\cdots, s_{n-1})$ be a sequence over $\mathbb{Z}_2$ and $nlc(\textbf{s}_{n})=c_{n}\geq 1$.
Let $\textbf{s}_{n+1}$ be a sequence obtained by extending $\textbf{s}_{n}$ with a term $s_n$ and $nlc(\textbf{s}_{n+1})=c_{n+1}$.
If $c_{n+1}>c_{n}$, then there is only one pair of identical subsequences of length $c_{n+1}-1$ with different successors in $\textbf{s}_{n+1}$.
\end{lem}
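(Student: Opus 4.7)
The plan is to exhibit the pair $(h,h+p)$ explicitly and then rule out any other by combining the uniqueness-of-successor property of $\textbf{s}_n$ with the distinctness of the prefix portion of the state sequence. To set up, I apply Lemma \ref{lem0}(iii): the hypothesis $c_{n+1}>c_n$ supplies parameters $h,p,m,k$ as in (\ref{state}) satisfying $c_{n+1}-c_n=(m-1)p+k\geq 1$ and $s_n\neq s_{n-p}$, and a short calculation using (\ref{Len1}) gives $n-c_{n+1}+1=h+p$, so position $h+p$ is the unique place where a window of length $c_{n+1}-1$ in $\textbf{s}_{n+1}$ is followed by the newly appended term $s_n$.

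To produce the candidate pair, I would use the periodicity in (\ref{state}): the identities $\underline{S}^l=\underline{S}^{l+p}$ hold for $l=h,h+1,\ldots,h+(m-1)p+k-1$, and expanding each into $c_n$ coordinate identities and taking the union of their index ranges yields $s_l=s_{l+p}$ for $l=h,h+1,\ldots,n-p-1$. This immediately gives $S_h^{h+c_{n+1}-2}=S_{h+p}^{h+p+c_{n+1}-2}$. Using (\ref{Len1}) once more, the two successors simplify to $s_{h+c_{n+1}-1}=s_{n-p}$ and $s_{h+p+c_{n+1}-1}=s_n$, which differ by Lemma \ref{lem0}(iii); hence $(h,h+p)$ is a pair of the desired type.

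For uniqueness, suppose $(i,j)$ with $i<j$ is any such pair. If $j\neq h+p$, then both windows together with their successors lie inside $\textbf{s}_n$; since $c_{n+1}-1\geq c_n=nlc(\textbf{s}_n)$, Lemma \ref{lem1}(i) applied to $\textbf{s}_n$ forces equal successors, a contradiction. Hence $j=h+p$, and reading off the first $c_n$ coordinates of the shared window gives $\underline{S}^i=\underline{S}^{h+p}=\underline{S}^h$. The states $\underline{S}^0,\underline{S}^1,\ldots,\underline{S}^{h+p-1}$ are pairwise distinct by (\ref{state}), which forces $i=h$, so $(h,h+p)$ is the only pair. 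The main subtlety in the plan is the index bookkeeping needed to identify $h+p$ as the extended window and to reduce the two successors to $s_{n-p}$ and $s_n$; everything else follows from direct appeals to Lemma \ref{lem0}(iii), Lemma \ref{lem1}(i), and the distinctness built into (\ref{state}).
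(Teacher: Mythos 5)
Your proof is correct and follows essentially the same route as the paper: both identify the pair $(h,h+p)$ from the state-sequence structure supplied by Lemma \ref{lem0}(iii), check via (\ref{Len1}) that it has length $c_{n+1}-1$ with successors $s_{n-p}\neq s_n$, and derive uniqueness from the pairwise distinctness of $\underline{S}^{0},\ldots,\underline{S}^{h+p-1}$. The only cosmetic difference is that you rule out a pair avoiding position $h+p$ by applying Lemma \ref{lem1}(i) inside $\textbf{s}_n$, whereas the paper simply counts the $h+p+1$ windows and observes the first $h+p$ are already distinct.
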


\begin{proof}
The existence is assured by Lemma \ref{lem1} (i), and it suffices to prove the uniqueness.
Since $c_{n+1}>c_{n}$, the general form of the state sequence of $\textbf{s}_{n}$ is
\begin{eqnarray*}\label{eq2}
\underline{S}=(\underbrace{\underline{S}^{0},\underline{S}^{1},\cdots,\underline{S}^{h-1}}_{\text{prefix}} ,\overbrace{\underbrace{\underline{S}^{h},\underline{S}^{h+1},\cdots,\underline{S}^{h+p-1}}_{\text{period}},\cdots,\underline{S}^{h+p-1} }^{\text{$mp$}} ,\underbrace{\underline{S}^{h},\underline{S}^{h+1},\cdots,\underline{S}^{h+k-1}}_{\text{suffix}}),
\end{eqnarray*}
where $m\geq 1$, and $\textbf{s}_{n}$ is extended such that the uniqueness of successor states is violated by Lemma \ref{lem0} (iii).
Then, one can verify that for the sequence $s_{n+1}$,
\begin{center}
  $S_{h}^{n-p-1}=S_{h+p}^{n-1}$ \quad \text{and} \quad  $s_{n}\neq s_{n-p}$.
\end{center}
In addition, the length of this pair of subsequences is equal to $n-p-h=c_n+(m-1)p+k-1=c_{n+1}-1$ by equation (\ref{Len1}) and Lemma \ref{lem0} (iii).

Since $\underline{S}^{0}, \underline{S}^{1},\cdots,\underline{S}^{h+p-1}$ are distinct,
it follows that  the first $h+p$ subsequences of length $c_{n+1}-1$ in $\textbf{s}_{n+1}$ are distinct.
It is obvious that there are altogether $n-c_{n+1}+2=h+p+1$ subsequences of length $c_{n+1}-1$ with a successor in $\textbf{s}_{n+1}$,
so that $S_{h}^{n-p-1}$ and $S_{h+p}^{n-1}$ are the only pair of identical subsequences of length $c_{n+1}-1$ with different successors in $\textbf{s}_{n+1}$. The proof is complete.
\end{proof}

By applying Lemma \ref{lem3}, we obtain the following theorem, which plays a significant role in the sequel.

\begin{thm}\label{thm1} Let $\textbf{s}_{n}=(s_{0},s_{1},\cdots, s_{n-1})$ be a sequence over $\mathbb{Z}_2$
and $nlc(\textbf{s}_{n})=c_{n}\geq\frac{n}{2}$. Then there exists exactly one pair of identical subsequences of length $c_{n}-1$ with different successors in $\textbf{s}_{n}$.
\end{thm}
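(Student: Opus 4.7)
The plan is to prove uniqueness by induction on $n$, with Lemma~\ref{lem3} handling the base step whenever the nonlinear complexity has just increased. Existence of at least one such pair is immediate from Lemma~\ref{lem1}(i), since $c_n\geq n/2\geq 3/2$ forces $c_n\geq 2$ and hence the presence of a length-$(c_n-1)$ subsequence occurring twice with different successors.

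For the base case $n=3$, the hypothesis $c_n\geq 3/2$ gives $c_n=2=n-1$, so Lemma~\ref{lem1}(ii) implies $\textbf{s}_3=(\alpha,\alpha,\beta)$ with $\alpha\neq\beta$. The only length-$1$ subsequences with a successor are $s_0=\alpha$ and $s_1=\alpha$, whose successors $\alpha$ and $\beta$ differ, and this is the single pair.

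For the inductive step ($n\geq 4$), I would compare $c_n$ with $c_{n-1}$. If $c_n>c_{n-1}$, then Lemma~\ref{lem3} applied to the extension $\textbf{s}_{n-1}\to\textbf{s}_n$ directly produces a unique pair. Otherwise $c_n=c_{n-1}\geq n/2\geq (n-1)/2$, so the inductive hypothesis supplies a unique pair in $\textbf{s}_{n-1}$; this pair persists in $\textbf{s}_n$, because its two indices are at most $n-1-c_n<n-c_n$. Thus the task reduces to showing that no additional pair appears when passing from $\textbf{s}_{n-1}$ to $\textbf{s}_n$. The only length-$(c_n-1)$ subsequence that newly acquires a successor is $P:=S_{n-c_n}^{n-2}$, whose successor is $s_{n-1}$, so a new pair could arise only if $P=P^j:=S_{j}^{j+c_n-2}$ for some $j\leq n-1-c_n$ with $s_{j+c_n-1}\neq s_{n-1}$.

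The main obstacle, and the crux of the argument, is ruling out the equality $P=P^j$. My plan is a hypothetical-extension device: assume $P=P^j$ and form the alternative extension $\textbf{s}_n'$ of $\textbf{s}_{n-1}$ whose last term is $s_{n-1}':=s_{j+c_n-1}$. Since $c_{n-1}\geq (n-1)/2$, Lemma~\ref{lem0}(i) yields $nlc(\textbf{s}_n')=c_{n-1}=c_n\geq n/2$, so by Lemma~\ref{lem1}(iii) the length-$c_n$ subsequences of $\textbf{s}_n'$ must be pairwise distinct. But in $\textbf{s}_n'$ the new state vector is
\[
\underline{S}^{n-c_n}=(P,\,s_{n-1}')=(P^j,\,s_{j+c_n-1})=\underline{S}^j,
\]
duplicating $\underline{S}^j$ at the earlier position $j<n-c_n$, a contradiction. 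Hence $P\neq P^j$ for every admissible $j$, no new pair is created, and $\textbf{s}_n$ inherits the unique pair of $\textbf{s}_{n-1}$, completing the induction.
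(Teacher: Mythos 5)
Your proof is correct, but it takes a genuinely different route from the paper's. The paper does not induct term by term: it locates the last ``jump point'' $l$ with $c_{l-1}<c_l=c_n$, gets uniqueness inside $\textbf{s}_{l}$ from Lemma~\ref{lem3}, and then kills any coincidence among the windows added between positions $l$ and $n$ by a positional argument --- every added window of length $c_n-1$ contains the block $S_{l-c_{l-1}-1}^{l-1}$, which occurs only once in $\textbf{s}_l$, and a coincidence would force a second, earlier occurrence of that block; this hinges on the inequality chain $l-c_{l-1}-1\geq c_n\geq n-c_n\geq l-c_n+1$ derived from Lemma~\ref{lem0}~(ii). Your one-step induction replaces all of that with the hypothetical-extension device: if the newly exposed window $P=S_{n-c_n}^{n-2}$ equalled an earlier window $P^j$, then re-extending $\textbf{s}_{n-1}$ with the symbol $s_{j+c_n-1}$ would, by Lemma~\ref{lem0}~(i), still give complexity $c_n\geq n/2$ while producing two equal states of length $c_n$, contradicting Lemma~\ref{lem1}~(iii). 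This is more local and elementary --- it avoids the inequality gymnastics entirely and even proves the slightly stronger fact that $P$ cannot recur at all --- while the paper's argument processes all the appended windows in one shot and stays closer to the state-sequence machinery. One trivial caveat, which the paper's own proof shares: Lemma~\ref{lem3} is stated under the hypothesis $c_{n-1}\geq 1$, so the jump case with $c_{n-1}=0$ (i.e.\ $\textbf{s}_n=(\alpha,\cdots,\alpha,\beta)$) should be dispatched separately via Lemma~\ref{lem1}~(ii), exactly as in your base case.
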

\begin{proof}
Let $\textbf{s}_{j}=(s_{0},s_{1},\cdots, s_{j-1})$ be the subsequence of length $j$ that consists of the first $j$ terms of $\textbf{s}_{n}$, where the integer $j$ satisfies $1\leq j\leq n$. Denote the nonlinear complexity of $\textbf{s}_{j}$ by $c_{j}$. It is obvious that $c_{1}\leq c_{2}\leq\cdots\leq c_{n}$. Let $l$ with $2\leq l\leq n$ be an integer such that $c_{l}=c_{n}$ and $c_{l-1}<c_{l}$. Then it follows from Lemma~\ref{lem0} (i) that $c_{l-1}<\frac{l-1}{2}$.

If $l=n$, we have $c_{n}>c_{n-1}$, then the assertion follows directly from Lemma~\ref{lem3}. It suffices to restrict our attention to the case $l\leq n-1$.

Since $c_{n}=c_{l}>c_{l-1}$, it follows from Lemma~\ref{lem3} that among
$S_{0}^{c_n-2}$, $S_{1}^{c_n-1}$,$\cdots$, $S_{l-c_n}^{l-2}$, the $l-c_{n}+1$ subsequences of length $c_{n}-1$ in $\textbf{s}_{l}$,
there is exactly a pair of subsequences that are identical and have different successors.
For the sequence $\textbf{s}_{n}$, we need to consider the $n-l$ additional subsequences of length $c_{n}-1$ with successors, that is,
\begin{eqnarray}\label{pr}
  S_{l-c_{n}+1}^{l-1},\quad S_{l-c_{n}+2}^{l},\quad\cdots,\quad S_{n-c_{n}}^{n-2},
\end{eqnarray}
and show that there is no other identical subsequence of length $c_{n}-1$ in $\textbf{s}_{n}$.

Suppose that there are other two identical subsequences of length $c_{n}-1$ in $\textbf{s}_{n}$, say
$$S_{a}^{a+c_{n}-2} = S_{b}^{b+c_{n}-2},$$
where the integers $a$ and $b$ satisfy $l-c_{n}+1\leq b\leq n-c_{n}$ and $0\leq a<b$.
Also since $c_{l}>c_{l-1}$, the subsequence $S_{l-c_{l-1}-1}^{l-1}$ appears only once in $\textbf{s}_{l}$.
For otherwise there would exist a pair of identical subsequences of length $c_{l-1}$ with different successors in $\textbf{s}_{l-1}$
 by Lemma~\ref{lem0} (iii), which would contradict $nlc(\textbf{s}_{l-1})=c_{l-1}$.
On the other hand, since $c_{l-1}<\frac{l-1}{2}$, we have $c_{n}=c_{l}\leq l-1-c_{l-1}$ by Lemma \ref{lem0} (ii).
Together with $c_{n}\geq \frac{n}{2}$ and $l\leq n-1$, we get
$$l-c_{l-1}-1\geq c_n\geq n-c_n\geq l-c_n+1.$$
Therefore, we conclude that all the $n-l$ additional sequences in (\ref{pr}) contain $S_{l-c_{l-1}-1}^{l-1}$ as their subsequence.  Then we have
$$S_{e}^{e+c_{l-1}}=S_{l-c_{l-1}-1}^{l-1},$$
where the integer $e=l-c_{l-1}-1-(b-a)<l-c_{l-1}-1$, which contradicts the fact that $S_{l-c_{l-1}-1}^{l-1}$ appears only once in $\textbf{s}_{l}$. The proof is complete.
\end{proof}
\begin{remark}\label{mark1}
This result provides a refinement of Lemma \ref{lem1} (iii).
For an arbitrary sequence $\textbf{s}_{n}$ 
with $nlc(\textbf{s}_{n})=c_{n}\geq\frac{n}{2}$, it follows from Lemma \ref{lem1} (iii) that all the subsequences of length $c_n$ are distinct. And if the sequence is over $\mathbb{Z}_2$, 
Theorem~\ref{thm1} further shows that there is only one pair of subsequences of length $c_n$ satisfying that their first $c_n-1$ corresponding terms are identical but their last terms are not equal,
that is, there exists a unique pair of integers $p_{1}$ and $p_{2}$ with $0\leq p_{1}<p_{2}\leq n-c_{n}$ such that
\begin{eqnarray}\label{eq4}
S_{p_{1}}^{p_{1}+c_{n}-2} = S_{p_{2}}^{p_{2}+c_{n}-2}\quad \text{and}\quad s_{p_{1}+c_{n}-1}\neq s_{p_{2}+c_{n}-1}.
\end{eqnarray}
The two integers $p_{1},p_{2}$ are important parameters with respect to $\textbf{s}_{n}$, and will be used frequently when characterizing the sequence $\textbf{s}_{n}$ in the sequel.

It should be noted that there may exist more than one pair of identical subsequences of length $c_{n}-1$ with different successors in $\textbf{s}_{n}$ if $c_{n}<\frac{n}{2}$. For example, the sequence $\textbf{s}_{8}=(0,0,1,0,1,1,0,0)$ has nonlinear complexity $c_{8}=3<\frac{8}{2}$. However, there exist two pair of identical subsequences of length $2$ with different successors in $\textbf{s}_{8}$, that is,
$$S_{1}^{2} = S_{3}^{4}\,,s_{3}\neq s_{5}\quad \text{and}\quad S_{2}^{3} = S_{5}^{6}\,,s_{4}\neq s_{7}.$$
\end{remark}
\subsection{Properties of left extension sequences}\label{Sec3.2}

The behavior of nonlinear complexity of the sequences obtained by adding terms at the end of $\textbf{s}_{n}$
has been studied in \cite{Jansen1,Jansen2,RK}. In this subsection, we will investigate the nonlinear complexity of the sequences
 obtained by adding terms at the beginning of $\textbf{s}_{n}$. In what follows, we will define the
 {\it $t$-term left (right) extension operation} and {\it $t$-term left (right) extension sequence}.
\begin{defn}\label{defn2}
For a positive integer $n$, let $\textbf{s}_{n}=(s_{0},s_{1},\cdots, s_{n-1})$ be a finite-length sequence over $\mathbb{Z}_2$.
For a positive integer $t$,
the operation of adding $t$ terms at the beginning of the sequence $\textbf{s}_{n}$ is called
$t$-term left extension operation, denoted by $\mathfrak{L}_{t}(\textbf{s}_n)$. If we set the $t$ terms
successively added at the beginning of $\textbf{s}_{n}$ as $\alpha_{0},\alpha_{1},\cdots,\alpha_{t-1}$, then
for a fixed $(\alpha_{0},\alpha_{1},\cdots,\alpha_{t-1})\in\mathbb{Z}_2^{t}$, the sequence
$$L_{t}(\textbf{s}_{n})=(\alpha_{t-1},\cdots,\alpha_{1},\alpha_{0},s_{0},s_{1},\cdots, s_{n-1})$$
is called a $t$-term left extension sequence of $\textbf{s}_{n}$. Similarly, the operation of
adding $t$ terms at the end of $\textbf{s}_{n}$ is called $t$-term right extension operation,
denoted by $\mathfrak{R}_{t}(\textbf{s}_n)$. A $t$-term right extension sequence of $\textbf{s}_{n}$
is accordingly defined as
$$R_{t}(\textbf{s}_{n})=(s_{0},s_{1},\cdots, s_{n-1},\beta_{0},\beta_{1},\cdots,\beta_{t-1}),$$
where $(\beta_{0},\beta_{1},\cdots,\beta_{t-1})\in\mathbb{Z}_2^{t}$ is a fixed vector.
In addition, we define $L_{0}(\textbf{s}_{n})=R_{0}(\textbf{s}_{n})=\textbf{s}_{n}$ for $t=0$.
\end{defn}

We next present several interesting properties of the nonlinear complexity of $L_{t}(\textbf{s}_{n})$.
\begin{prop}\label{cor1} For a sequence $\textbf{s}_{n}=(s_{0},s_{1},\cdots, s_{n-1})$ over
$\mathbb{Z}_2$ with $nlc(\textbf{s}_{n})=c_{n}\geq\frac{n}{2}$, let $p_{1}$ and $p_{2}$ be
the unique pair of integers such that the condition in (\ref{eq4}) holds and $p_{1}=0$. Then for the $1$-term left extension
sequence $L_{1}(\textbf{s}_{n})=(\alpha_{0},s_{0},s_{1},\cdots, s_{n-1})$,
$nlc(L_{1}(\textbf{s}_{n}))=c_{n}$ if $\alpha_{0}\neq s_{p_{2}-1}$.
\end{prop}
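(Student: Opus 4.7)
The plan is to prove $nlc(L_{1}(\textbf{s}_{n})) = c_{n}$ by matching a lower and an upper bound on $nlc(L_{1}(\textbf{s}_{n}))$.

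The lower bound is immediate. The distinguished pair $S_{p_{1}}^{p_{1}+c_{n}-2} = S_{p_{2}}^{p_{2}+c_{n}-2}$ from (\ref{eq4}), with differing successors $s_{c_{n}-1} \neq s_{p_{2}+c_{n}-1}$ (recall $p_{1}=0$), survives the left extension verbatim: in $L_{1}(\textbf{s}_{n})$ it reappears at positions $1$ and $p_{2}+1$ as a pair of identical length-$(c_{n}-1)$ subsequences whose successors are still the old $s_{c_{n}-1}$ and $s_{p_{2}+c_{n}-1}$. Lemma~\ref{lem1}(i) then yields $nlc(L_{1}(\textbf{s}_{n})) \geq c_{n}$.

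For the upper bound $nlc(L_{1}(\textbf{s}_{n})) \leq c_{n}$, Lemma~\ref{lem1}(i) reduces the task to excluding any pair of identical length-$c_{n}$ subsequences of $L_{1}(\textbf{s}_{n})$ having different successors. The length-$c_{n}$ windows of $L_{1}(\textbf{s}_{n})$ that actually carry a successor are $W_{0} = (\alpha_{0}, s_{0}, s_{1}, \ldots, s_{c_{n}-2})$ (the new one) and $W_{j} = S_{j-1}^{j+c_{n}-2}$ for $j = 1, \ldots, n-c_{n}$. Since the $W_{j}$ with $j \geq 1$ are length-$c_{n}$ subsequences of $\textbf{s}_{n}$, they are pairwise distinct by Lemma~\ref{lem1}(iii). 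Thus the only possible failure of the upper bound is an equality $W_{0} = W_{j}$ for some $j \in \{1, \ldots, n-c_{n}\}$. Assuming such an equality, matching coordinates forces $\alpha_{0} = s_{j-1}$ together with $S_{0}^{c_{n}-2} = S_{j}^{j+c_{n}-2}$. Because $j \leq n-c_{n}$, both occurrences of this length-$(c_{n}-1)$ word sit at positions possessing a successor, and Proposition~\ref{lem2} caps the number of such occurrences at two. Equation~(\ref{eq4}) with $p_{1}=0$ identifies these two positions as $0$ and $p_{2}$, which forces $j = p_{2}$ and therefore $\alpha_{0} = s_{p_{2}-1}$, contradicting the standing hypothesis. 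Combining the two bounds gives $nlc(L_{1}(\textbf{s}_{n})) = c_{n}$.

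I expect the main hurdle to be applying Proposition~\ref{lem2} with the correct bookkeeping: one must restrict to the range $j \leq n-c_{n}$ (so that both sides of $S_{0}^{c_{n}-2} = S_{j}^{j+c_{n}-2}$ are occurrences with successors and Proposition~\ref{lem2} actually bites), and one must observe that the corner case $W_{0} = W_{n-c_{n}+1}$ is harmless, since the right-hand window has no successor in $L_{1}(\textbf{s}_{n})$ and therefore cannot participate in a length-$c_{n}$ pair with different successors.
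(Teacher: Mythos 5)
Your proof is correct and follows essentially the same route as the paper's: the distinguished pair from (\ref{eq4}) survives the extension to give the lower bound, and the condition $\alpha_{0}\neq s_{p_{2}-1}$ prevents any length-$c_{n}$ repetition with different successors, giving the upper bound via Lemma~\ref{lem1}(i). The paper asserts the upper bound in a single sentence, whereas your window-by-window analysis using Lemma~\ref{lem1}(iii), Proposition~\ref{lem2} and the uniqueness in (\ref{eq4}) supplies the justification the paper leaves implicit.
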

\begin{proof} From Theorem~\ref{thm1}, there exists exactly one pair of identical subsequences of length $c_{n}-1$ with different successors in $\textbf{s}_{n}$, that is,
$$S_{0}^{c_{n}-2} = S_{p_{2}}^{p_{2}+c_{n}-2}\quad \text{and}\quad s_{c_{n}-1}\neq s_{p_{2}+c_{n}-1}.$$
Since $\alpha_{0}\neq s_{p_{2}-1}$, the length of the longest subsequence that occurs twice with different successors in $L_{1}(\textbf{s}_{n})$ is still equal to $c_{n}-1$, then the result follows immediately from Lemma \ref{lem1} (i).
\end{proof}

\begin{thm}\label{thm2} For a sequence $\textbf{s}_{n}=(s_{0},s_{1},\cdots, s_{n-1})$
over $\mathbb{Z}_2$ with $nlc(\textbf{s}_{n})=c_{n}\geq\frac{n+1}{2}$,
let $p_{1}$ and $p_{2}$ be the unique pair of integers such that the condition in (\ref{eq4})
holds and $p_{1}=0$. Let $\textbf{u}_{n+1}=(\alpha,s_{0},s_{1},\cdots, s_{n-1})$ be
the $1$-term left extension sequence of $\textbf{s}_{n}$ with $\alpha\neq s_{p_{2}-1}$ and
$L_{k}(\textbf{u}_{n+1})$ be an arbitrary $k$-term left extension sequence of $\textbf{u}_{n+1}$. Then we have
 $$nlc(L_{k}(\textbf{u}_{n+1}))=c_{n}\quad \text{for}\quad 0\leq k\leq2c_{n}-n-1.$$
\end{thm}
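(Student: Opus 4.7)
The plan is to prove the statement by induction on $k$. The base case $k=0$ is immediate: $L_{0}(\textbf{u}_{n+1})=\textbf{u}_{n+1}$, and since $c_{n}\geq (n+1)/2 > n/2$, Proposition~\ref{cor1} applies to $\textbf{s}_n$ and yields $nlc(\textbf{u}_{n+1})=c_{n}$ directly.

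For the inductive step, I assume $nlc(L_{k-1}(\textbf{u}_{n+1}))=c_{n}$ for some $k$ with $1\leq k\leq 2c_{n}-n-1$. Let $\textbf{t}=L_{k-1}(\textbf{u}_{n+1})$, of length $n+k$, and write $L_{k}(\textbf{u}_{n+1})=(\beta,\textbf{t})$ for some $\beta\in\mathbb{Z}_{2}$. Since any pair of identical length-$(c_n-1)$ subsequences of $\textbf{t}$ with different successors persists in $(\beta,\textbf{t})$ after a shift of all indices by $1$, I immediately obtain $nlc((\beta,\textbf{t}))\geq c_{n}$, so the task reduces to ruling out a strict increase. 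Suppose for contradiction that $nlc((\beta,\textbf{t}))\geq c_{n}+1$. By Lemma~\ref{lem1}(i) there is then a pair of identical length-$c_{n}$ subsequences of $(\beta,\textbf{t})$ with different successors, at positions $i_{1}<i_{2}$. The bound $k\leq 2c_{n}-n-1$ gives $n+k\leq 2c_{n}-1$, hence $c_{n}>(n+k)/2$, so Lemma~\ref{lem1}(iii) applied to $\textbf{t}$ forces the length-$c_n$ subsequences of $\textbf{t}$ to be pairwise distinct; thus $i_{1}=0$, i.e.\ the new repetition must involve the prepended term $\beta$. Stripping coordinate $0$ on both sides of the equality then yields a pair of identical length-$(c_{n}-1)$ subsequences of $\textbf{t}$ at positions $0$ and $i_{2}$, whose successors $t_{c_{n}-1}$ and $t_{i_{2}+c_{n}-1}$ differ. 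Applying Theorem~\ref{thm1} to $\textbf{t}$, again using $c_{n}>(n+k)/2$, this pair must be unique. However, the pair inherited from $\textbf{s}_{n}$ (at positions $0$ and $p_2$ there) sits at positions $k$ and $k+p_{2}$ in $\textbf{t}$, both at least $1$ since $k\geq 1$. Uniqueness therefore forces $0\in\{k,k+p_{2}\}$, a contradiction.

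The core of the argument is largely organizational: tracking the inherited pair along the chain $\textbf{s}_{n}\subset\textbf{u}_{n+1}\subset\cdots\subset L_{k}(\textbf{u}_{n+1})$, and verifying at every step that the bound $k\leq 2c_{n}-n-1$ preserves the strict inequality $c_{n}>(n+k)/2$, which is precisely what keeps Lemma~\ref{lem1}(iii) and Theorem~\ref{thm1} applicable to the intermediate sequence $\textbf{t}$. The conceptual obstacle is the realization that any potential strict increase in $nlc$ must be witnessed by a repeat starting at position $0$, whose projection to $\textbf{t}$ is then forced by uniqueness to coincide with the inherited pair positioned at $k$ and $k+p_{2}$ — which is impossible for $k\geq 1$. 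The bound $k\leq 2c_{n}-n-1$ is sharp for this argument, since beyond it the length of the extended sequence exceeds $2c_{n}$ and the uniqueness on which the contradiction rests is no longer guaranteed.
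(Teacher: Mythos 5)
Your proof is correct, and it takes a genuinely different route from the paper's. The paper passes to the reciprocal sequence $\widehat{\textbf{u}}_{n+1}$ (defined by $\widehat{u}_i=u_{n-i}$), uses the hypothesis $\alpha\neq s_{p_2-1}$ together with Lemma~\ref{lem1}~(iii) to show $nlc(\widehat{\textbf{u}}_{n+1})=c_n$, and then observes that every left extension $L_k(\textbf{u}_{n+1})$ is the reversal of a right extension $R_k(\widehat{\textbf{u}}_{n+1})$ of length at most $2c_n$, so Lemma~\ref{lem0}~(i) settles all $k$ at once. You instead induct on $k$ one prepended symbol at a time: the lower bound $nlc\geq c_n$ is free because the inherited critical pair persists under index shifts, and a strict increase is ruled out because Lemma~\ref{lem1}~(iii) (valid since $c_n>(n+k)/2$) forces any new length-$c_n$ repeat to start at position $0$, whence stripping the first coordinate produces a second pair of identical length-$(c_n-1)$ subsequences with different successors that Theorem~\ref{thm1} says cannot coexist with the inherited pair sitting at positions $k$ and $k+p_2\geq 1$. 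Your argument leans on Theorem~\ref{thm1} where the paper leans on the reversal symmetry plus Lemma~\ref{lem0}~(i); the trade-off is that the paper's proof is shorter once the reversal step is accepted, while yours avoids introducing the reciprocal sequence altogether, makes explicit why the hypothesis $\alpha\neq s_{p_2-1}$ is needed only at the very first step ($k=0$), and shows transparently where the bound $k\leq 2c_n-n-1$ enters (namely, in keeping $c_n>(n+k)/2$ so that Lemma~\ref{lem1}~(iii) and Theorem~\ref{thm1} remain applicable to each intermediate sequence). Both proofs are sound.
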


\begin{proof} The result $nlc(L_{0}(\textbf{u}_{n+1}))=c_{n}$ for the case $k=0$ follows
from Proposition \ref{cor1}. It remains to consider the case $1\leq k\leq 2c_n-n-1$. Note that for
the sequence $\textbf{s}_{n}$ we have
\begin{eqnarray*}\label{eqSen}
S_{0}^{c_{n}-2} = S_{p_{2}}^{p_{2}+c_{n}-2}\quad \text{and}\quad s_{c_{n}-1}\neq s_{p_{2}+c_{n}-1}.
\end{eqnarray*}
Set $\textbf{u}_{n+1}=(u_{0},u_{1},\cdots, u_{n})$, where $u_{0}=\alpha$ and $u_{i}=s_{i-1}$ for $1\leq i\leq n$.
Since $nlc(\textbf{u}_{n+1})=nlc(\textbf{s}_{n})=c_{n}\geq\frac{n+1}{2}$, it then follows
from Theorem~\ref{thm1} that there is exactly one pair of identical subsequences of length $c_{n}-1$
with different successors in $\textbf{u}_{n+1}$, that is,
\begin{equation}\label{eqT}
(u_1,u_2,\cdots,u_{c_n-1})=(u_{p_{2}+1},u_{p_{2}+2},\cdots,u_{p_{2}+c_n-1})\quad \text{and}\quad u_{c_n}\neq u_{p_{2}+c_n}.
\end{equation}

Now we define the reciprocal sequence $\widehat{\textbf{u}}_{n+1}=(\widehat{u}_0,\widehat{u}_1,\cdots,\widehat{u}_{n})$
of $\textbf{u}_{n+1}$ by $\widehat{u}_i=u_{n-i}$ for $0\leq i\leq n$.
Then from (\ref{eqT}) we have
\begin{equation*}
(\widehat{u}_{n-c_n+1},\widehat{u}_{n-c_n+2},\cdots,\widehat{u}_{n-1})=
(\widehat{u}_{n-p_{2}-c_n+1},\widehat{u}_{n-p_{2}-c_n+2},\cdots,\widehat{u}_{n-p_{2}-1})
\end{equation*}
and $\widehat{u}_{n-c_n}\neq\widehat{u}_{n-p_{2}-c_n}$. Moreover, $\widehat{u}_{n}=u_0=\alpha$ and $\widehat{u}_{n-p_2}=u_{p_2}=s_{p_2-1}$
yield $\widehat{u}_{n}\neq\widehat{u}_{n-p_{2}}$. We note that all the subsequences of
length $c_n$ in $\textbf{u}_{n+1}$ are distinct by Lemma~\ref{lem1} (iii)
since $nlc(\textbf{u}_{n+1})=c_n\geq \frac{n+1}{2}$. Therefore, the length of the longest subsequence that occurs
twice with different successors in $\widehat{\textbf{u}}_{n+1}$ is $c_n-1$, which implies by
Lemma \ref{lem1} (i) that
\begin{equation}\label{nlc_recip}
 nlc(\widehat{\textbf{u}}_{n+1})=nlc(\textbf{u}_{n+1})=c_{n}.
\end{equation}

Next we consider the $k$-term right extension sequences of $\widehat{\textbf{u}}_{n+1}$. Note that the
length of each sequence $R_{k}(\widehat{\textbf{u}}_{n+1})$ with $0\leq k\leq 2c_n-n-1$
is not greater than $2c_n$. Then by Lemma \ref{lem0} (i) we have
$$nlc(R_{k}(\widehat{\textbf{u}}_{n+1}))=c_{n}\quad \text{for}\quad 0\leq k\leq 2c_n-n-1,$$
and the unique pair of subsequences of length $c_{n}-1$ in $R_{k}(\widehat{\textbf{u}}_{n+1})$ is
the same as that in $\widehat{\textbf{u}}_{n+1}$. Since each sequence $L_{k}(\textbf{u}_{n+1})$
is actually the reciprocal sequence of $R_{k}(\widehat{\textbf{u}}_{n+1})$, the arguments
leading to (\ref{nlc_recip}) show that $nlc(L_{k}(\textbf{u}_{n+1}))=c_n$ for all $1\leq k\leq 2c_n-n-1$.
\end{proof}
\begin{cor}\label{cor2} Let $\textbf{s}_{n}=(s_{0},s_{1},\cdots, s_{n-1})$
be a sequence over $\mathbb{Z}_2$ with $nlc(\textbf{s}_{n})=c_{n}\geq\frac{n}{2}$, and let $p_{1}$ and $p_{2}$ be
the unique pair of integers such that the condition in (\ref{eq4}) holds and $p_{1}\geq1$. Then we have
 $$nlc(L_{k}(\textbf{s}_{n}))=c_{n}\quad \text{for}\quad 1\leq k\leq 2c_n-n.$$
\end{cor}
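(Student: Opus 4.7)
The plan is to mirror the reciprocal-based argument of Theorem~\ref{thm2}. The distinctive feature here is that $p_{1}\geq 1$, so the unique pair of identical length-$(c_{n}-1)$ subsequences does not begin at position~$0$, and in particular both terms $s_{p_{1}-1}$ and $s_{p_{2}-1}$ exist. This will make it unnecessary to prepend a preliminary term as was done in Theorem~\ref{thm2}, and it will permit the weaker hypothesis $c_{n}\geq n/2$ (rather than $c_{n}\geq (n+1)/2$).

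First I will establish the key auxiliary fact $s_{p_{1}-1}\neq s_{p_{2}-1}$: if equality held, then together with $S_{p_{1}}^{p_{1}+c_{n}-2}=S_{p_{2}}^{p_{2}+c_{n}-2}$ this would produce two identical length-$c_{n}$ subsequences $S_{p_{1}-1}^{p_{1}+c_{n}-2}$ and $S_{p_{2}-1}^{p_{2}+c_{n}-2}$ in $\textbf{s}_{n}$, contradicting Lemma~\ref{lem1}(iii). Next, passing to the reciprocal $\widehat{\textbf{s}}_{n}$ defined by $\widehat{s}_{i}=s_{n-1-i}$, the reversal of the given identity produces in $\widehat{\textbf{s}}_{n}$ two identical length-$(c_{n}-1)$ subsequences starting at positions $n-c_{n}+1-p_{2}$ and $n-c_{n}+1-p_{1}$, whose successors $\widehat{s}_{n-p_{2}}=s_{p_{2}-1}$ and $\widehat{s}_{n-p_{1}}=s_{p_{1}-1}$ differ by the previous step; note that both successors lie inside $\widehat{\textbf{s}}_{n}$ precisely because $p_{1}\geq 1$. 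Because the bijection $i\mapsto n-1-i$ preserves distinctness of length-$c_{n}$ subsequences, Lemma~\ref{lem1}(iii) transfers from $\textbf{s}_{n}$ to $\widehat{\textbf{s}}_{n}$, and Lemma~\ref{lem1}(i) then yields $nlc(\widehat{\textbf{s}}_{n})=c_{n}$.

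Knowing $nlc(\widehat{\textbf{s}}_{n})=c_{n}\geq n/2$, I will then iterate Lemma~\ref{lem0}(i) on the right end of $\widehat{\textbf{s}}_{n}$: each one-term right extension preserves the nonlinear complexity as long as twice the current complexity is at least the current length, which remains valid for every $R_{k}(\widehat{\textbf{s}}_{n})$ with $0\leq k\leq 2c_{n}-n$. Since, for any choice of added terms, $L_{k}(\textbf{s}_{n})$ coincides with the reciprocal of the corresponding $R_{k}(\widehat{\textbf{s}}_{n})$, and since $c_{n}\geq (n+k)/2$ ensures via Lemma~\ref{lem1}(iii) that all length-$c_{n}$ subsequences of $R_{k}(\widehat{\textbf{s}}_{n})$, and hence of $L_{k}(\textbf{s}_{n})$, are distinct, the only remaining observation is that the pair at the shifted positions $(p_{1}+k,p_{2}+k)$ in $L_{k}(\textbf{s}_{n})$ is still a pair of identical length-$(c_{n}-1)$ subsequences with different successors; Lemma~\ref{lem1}(i) then delivers $nlc(L_{k}(\textbf{s}_{n}))=c_{n}$. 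The main delicate point is the middle step: one must carefully translate the ``different-successors'' condition in $\textbf{s}_{n}$ into a genuine ``different-successors'' condition in $\widehat{\textbf{s}}_{n}$ (rather than a ``different-predecessors'' one), and this translation is exactly what the hypothesis $p_{1}\geq 1$ makes possible.
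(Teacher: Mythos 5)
Your proposal is correct and follows essentially the same route as the paper: the paper's own proof consists of exactly your key observation that $s_{p_{1}-1}\neq s_{p_{2}-1}$ (via Lemma~\ref{lem1}~(iii)) followed by ``viewing $\textbf{s}_{n}$ as $\textbf{u}_{n+1}$'' and citing Theorem~\ref{thm2}, whose reciprocal-plus-Lemma~\ref{lem0}~(i) argument you have simply unfolded and re-verified in place. The index bookkeeping and the range $1\leq k\leq 2c_{n}-n$ in your version check out.
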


\begin{proof} We note that $s_{p_{1}-1}\neq s_{p_{2}-1}$, for otherwise there would exist
a pair of identical subsequences of length $c_n$ in $\textbf{s}_n$,
which would contradict Lemma \ref{lem1} (iii). Now we view the sequence $\textbf{s}_{n}$
as $\textbf{u}_{n+1}$, the desired result then follows immediately from Theorem~\ref{thm2}.
\end{proof}

\section{Construction of $\textbf{s}_{n}$ with $nlc(\textbf{s}_{n})\geq\frac{n}{2}$}\label{Sec4}

In this section, we denote the set of all sequences with length $n$ and nonlinear complexity $c_{n}$
over $\mathbb{Z}_{2}$ by $Z_{2}(n,c_{n})$. With the preparations above, we are now ready to present
a construction of all sequences in $Z_{2}(n,c_{n})$ with $c_{n}\geq\frac{n}{2}$, which is based on
a partition of $Z_{2}(n,c_{n})$.

\subsection{A partition of $Z_{2}(n,c_{n})$ with $c_{n}\geq\frac{n}{2}$}\label{Sec4.1}

Recall that there is only a pair of identical subsequences of length $c_{n}-1$ with
different successors in a sequence $\textbf{s}_n$ if $nlc(\textbf{s}_{n})\geq\frac{n}{2}$, and $p_{1}$, $p_{2}$ are
the unique pair of integers with $0\leq p_{1}<p_{2}\leq n-c_{n}$ such that the condition in (\ref{eq4})
holds. Define the distance between the two identical subsequences $S_{p_{1}}^{p_{1}+c_{n}-2}$
and $S_{p_{2}}^{p_{2}+c_{n}-2}$ as
the {\it distance} of $\textbf{s}_{n}$, denoted by $Dis(\textbf{s}_n)$, that is,
$$Dis(\textbf{s}_{n})=p_{2}-p_{1}.$$
It's obvious that $1\leq Dis(\textbf{s}_n)\leq n-c_{n}$. Define a set of sequences
\begin{align*}
Z_{2}(n,c_{n},d)&=\{\textbf{s}_{n}\ | \ \textbf{s}_{n}\in Z_{2}(n,c_{n})\quad \text{and} \quad Dis(\textbf{s}_{n})=d\}.
\end{align*}
Thus, all the sequences in $Z_{2}(n,c_{n})$ are divided into $n-c_{n}$ disjoint
classes according to the distance of each sequence. That is to say, $Z_{2}(n,c_{n})$ can be represented as
\begin{eqnarray}\label{DIRE}
Z_{2}(n,c_{n})=\bigcup\limits^{n-c_{n}}_{d=1}Z_{2}(n,c_{n},d).
\end{eqnarray}
Next we will show how to construct the sets $Z_{2}(n,c_{n},d)$ for $1\leq d\leq n-c_{n}$.

We first consider the case $d=n-c_{n}$, in which the unique pair of
integers $p_{1}$ and $p_{2}$ such that the condition
in (\ref{eq4}) holds for $\textbf{s}_{n}\in Z_{2}(n,c_{n},n-c_{n})$ is specific,
that is, $p_{1}=0$ and $p_{2}=n-c_{n}$.
\subsection{Structure of the sequences in $Z_{2}(n,c_{n},n-c_{n})$}\label{Sec4.2}
To characterize the structure of the sequences in $Z_{2}(n,c_{n},n-c_{n})$,
we give the following definition of periodic (aperiodic) finite-length sequences.

\begin{defn}\label{aper}
Let $\textbf{s}_{n}=(s_{0},s_{1},\cdots, s_{n-1})$ be a finite-length sequence of length $n$
over $\mathbb{Z}_2$. If there exists at least one positive integer $e\,|\,n$ with $e<n$
such that $s_{i+e}=s_i$ for $0\leq i\leq n-e-1$, that is, $\textbf{s}_{n}$ is of the form
$$\textbf{s}_{n}=(s_0,s_1,\cdots,s_{e-1})^{\frac{n}{e}},$$
then $\textbf{s}_n$ is called a periodic finite-length sequence. Otherwise, $\textbf{s}_n$ is called an aperiodic finite-length sequence.
\end{defn}

The following theorem reveals the structure of the sequences in $Z_{2}(n,c_{n},n-c_{n})$.
In what follows, the operation ``$\oplus$'' denotes the addition modulo $2$.

\begin{thm}\label{Main}
Let $n$ and $c_{n}$ be two integers with $n\geq 3$ and $\frac{n}{2}\leq c_n\leq n-1$. Then a sequence $\textbf{s}_{n}=(s_{0},s_{1},\cdots,s_{n-1})\in Z_{2}(n,c_{n},n-c_{n})$ if and only if $\textbf{s}_{n}$ has the form
\begin{eqnarray}\label{FORM}
\textbf{s}_{n}=\big((s_0,s_1,\cdots,s_{n-c_{n}-1})^m(s_0,s_1,\cdots,s_{r-1},\overline{s}_{r})\big),
\end{eqnarray}
where $(s_0,s_1,\cdots,s_{n-c_{n}-1})$ is an arbitrary aperiodic
finite-length sequence over $\mathbb{Z}_2$, $m\geq1$ and $0\leq r<n-c_{n}$ are integers such that $n-1=(n-c_{n})m+r$, and $\overline{s}_{r}=s_{r}\oplus1$.
\end{thm}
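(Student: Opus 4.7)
The plan is to prove the two implications by unpacking what $Dis(\textbf{s}_{n})=n-c_{n}$ forces: the pair $(p_{1},p_{2})$ from (\ref{eq4}) must be $(0,n-c_{n})$, so the first $n-1$ terms of $\textbf{s}_{n}$ form an almost-periodic pattern of period $e:=n-c_{n}$ while the last term breaks the periodicity. Concretely, setting $p_{1}=0$ and $p_{2}=e$ in (\ref{eq4}) gives $s_{i}=s_{i+e}$ for $0\leq i\leq c_{n}-2$ and $s_{c_{n}-1}\neq s_{n-1}$; iterating the equality yields $s_{i}=s_{i\bmod e}$ for $0\leq i\leq n-2$, and writing $n-1=em+r$ puts $\textbf{s}_{n}$ in the stated form modulo verifying aperiodicity of the block $(s_{0},\ldots,s_{e-1})$.

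For the \emph{necessity} direction, the only remaining point is aperiodicity of that block, which I would prove by contradiction: if a proper divisor $e'<e$ of $e$ were a period of $(s_{0},\ldots,s_{e-1})$, then the prefix of $\textbf{s}_{n}$ would inherit period $e'$, forcing $s_{c_{n}-1}=s_{e'+c_{n}-1}$, so the length-$c_{n}$ subsequences $S_{0}^{c_{n}-1}$ and $S_{e'}^{e'+c_{n}-1}$ would coincide, contradicting Lemma~\ref{lem1}(iii).

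For the \emph{sufficiency} direction, a sequence of the given form immediately satisfies $S_{0}^{c_{n}-2}=S_{e}^{e+c_{n}-2}$ with distinct successors $s_{c_{n}-1}=s_{r}$ and $s_{n-1}=\overline{s}_{r}$, so Lemma~\ref{lem1}(i) yields $nlc(\textbf{s}_{n})\geq c_{n}$ and exhibits distance $e$. What remains is to exclude any coincidence $S_{i}^{i+c_{n}-1}=S_{j}^{j+c_{n}-1}$ with $0\leq i<j\leq e$. In the interior case $j\leq e-1$, both subsequences sit in the periodic prefix and the $c_{n}\geq e$ coordinate-wise equalities sweep through every residue modulo $e$, imposing period $\gcd(j-i,e)<e$ on the block, a contradiction. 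The case $j=e$, $i=0$ collapses to $s_{r}=\overline{s}_{r}$, impossible.

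The delicate subcase is $j=e$ with $1\leq i\leq e-1$ at the threshold $c_{n}=n/2$, where only $c_{n}-1=e-1$ coordinate equalities are available alongside an inequality supplied by the flipped final term. These equalities say that the cyclic shift of $(s_{0},\ldots,s_{e-1})$ by $i$ agrees with the block on all positions save possibly position $e-1$. Here I would invoke a Hamming-weight argument specific to $\mathbb{Z}_{2}$: two binary strings of equal weight cannot differ in exactly one position, because a single bit flip would change the weight by $\pm 1$. Hence the shift must agree on position $e-1$ as well, forcing $s_{i-1}=s_{e-1}$ and flatly contradicting the inequality. This parity step, which uses binariness in an essential way, is the main subtlety I anticipate; the other subcases reduce to routine period propagation.
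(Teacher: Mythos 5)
Your proposal is correct and follows essentially the same route as the paper's proof: unpack $Dis(\textbf{s}_{n})=n-c_{n}$ into the near-periodic form with a flipped last term, prove aperiodicity of the length-$(n-c_{n})$ block by contradiction, and for sufficiency rule out repeated length-$c_{n}$ subsequences by propagating the coordinate equalities around the period and passing to a gcd. Two points of divergence are worth noting. First, for aperiodicity you contradict Lemma~\ref{lem1}(iii) by producing two equal length-$c_{n}$ subsequences $S_{0}^{c_{n}-1}=S_{e'}^{e'+c_{n}-1}$, whereas the paper exhibits a repeated subsequence of length $n-e'-1$ with different successors and concludes $nlc(\textbf{s}_{n})\geq n-e'>c_{n}$ directly from Lemma~\ref{lem1}(i); both are valid. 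Second, and more substantively, the ``delicate subcase'' $j=e$, $1\leq i\leq e-1$ at the threshold $c_{n}=n/2$, where you deploy the Hamming-weight parity argument, is not actually needed: to conclude $nlc(\textbf{s}_{n})=c_{n}$ from Lemma~\ref{lem1}(i) one only has to exclude coincidences between length-$c_{n}$ subsequences that both possess successors, i.e.\ both starting positions at most $n-c_{n}-1$, and the occurrence starting at $j=e=n-c_{n}$ has no successor. The paper restricts to $0\leq a<b\leq n-c_{n}-1$ for exactly this reason and never meets the threshold difficulty. Your parity argument is correct and is a pleasant binary-specific observation (it directly re-proves the conclusion of Lemma~\ref{lem1}(iii) for these sequences), but it is superfluous work rather than a necessary ingredient.
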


\begin{proof} The case $n-c_n=1$ is trivial by Lemma \ref{lem1} (ii). For the case $n-c_n>1$, we first show the necessity.
Suppose that $\textbf{s}_{n}=(s_0,s_1,\cdots,s_{n-1})\in Z_{2}(n,c_{n},n-c_{n})$.
By Remark \ref{mark1}, $Dis(\textbf{s}_{n})=p_2-p_1=n-c_{n}$ suggests $p_{1}=0$ and $p_{2}=n-c_{n}$. Thus
$$S_{0}^{c_{n}-2} = S_{n-c_{n}}^{n-2}\quad \text{and}\quad s_{c_{n}-1}\neq s_{n-1},$$
which is equivalent to
$$s_{i}=s_{i+(n-c_{n})}\quad\text{for}\quad0\leq i\leq c_{n}-2\quad\text{and}\quad s_{c_{n}-1}\oplus1=s_{n-1}.$$
And since $n-1=(n-c_{n})m+r$, it follows that $\textbf{s}_n$ has the form
\begin{eqnarray*}
\textbf{s}_{n}=\big((s_0,s_1,\cdots,s_{n-c_{n}-1})^m(s_0,s_1,\cdots,s_{r-1},\overline{s}_{r})\big),
\end{eqnarray*}
where $\overline{s}_{r}=s_{r}\oplus1$.
Next, we claim that $(s_{0},s_1,\cdots,s_{n-c_{n}-1})$ must be an aperiodic finite-length sequence.
Suppose that $(s_{0},s_1,\cdots,s_{n-c_{n}-1})$ is a periodic finite-length sequence
with length $n-c_{n}\geq 2$. Then there exists a positive divisor $e$ of $n-c_{n}$
with $e<n-c_{n}$ such that $s_{i+e}=s_i$ holds for each $i$ with $0\leq i\leq n-c_{n}-e-1$ according to Definition \ref{aper}.
This implies that $(s_{0},s_1,\cdots,s_{n-c_{n}-1})$ has the form $(s_{0},s_1,\cdots,s_{e-1})^f$, where the integer $f=(n-c_{n})/e$. Together with (\ref{FORM}), we obtain that $\textbf{s}_n$ has the form
$$\textbf{s}_n=\big((s_0,s_1,\cdots,s_{e-1})^{mf+q}(s_0,s_1,\cdots,s_{r'-1},\overline{s}_{r'})\big),$$
where $q$ and $r'$ with $r'<e$ are the unique pair of nonnegative integers satisfying $r=eq+r'$,
and $\overline{s}_{r'}=s_{r'}\oplus1$. Thus, one can verify that in the sequence $\textbf{s}_n$,
$$S_{0}^{n-e-2}=S_{e}^{n-2}\quad \text{and}\quad s_{n-e-1} \neq s_{n-1},$$
which suggests $nlc(\textbf{s}_n)\geq n-e$ by Lemma \ref{lem1} (i), and hence
$nlc(\textbf{s}_n)>c_{n}$, a contradiction.

Conversely, if the sequence $\textbf{s}_n$ is of the form (\ref{FORM}),
then one can verify that there exists a pair of identical subsequences of
length $(n-c_{n})(m-1)+r=c_{n}-1$ with different successors in $\textbf{s}_n$, that is,
\begin{eqnarray*}
S_{0}^{c_{n}-2} = S_{n-c_{n}}^{n-2}\quad \text{and}\quad s_{c_{n}-1}\neq s_{n-1}.
\end{eqnarray*}
Thus, we get $nlc(\textbf{s}_{n})\geq c_{n}$ by Lemma \ref{lem1} (i) and note that the distance of this pair of identical subsequences is $n-c_n$. To show $\textbf{s}_{n}\in Z_{2}(n,c_{n},n-c_{n})$ it suffices to prove $nlc(\textbf{s}_{n})=c_{n}$. We do this by showing that all the subsequences of length $c_{n}$ in $\textbf{s}_{n}$ are distinct. Suppose that there exists a subsequence of length $c_{n}$ that occurs twice with successors in $\textbf{s}_{n}$, say
$$S_{a}^{a+c_{n}-1} = S_{b}^{b+c_{n}-1},$$
where $0\leq a<b\leq n-c_{n}-1$. Put $\delta=b-a$, then we have $1\leq\delta\leq n-c_{n}-1$ and
\begin{eqnarray}\label{eq32}
s_{\rho}=s_{\rho+\delta}\quad \mbox{for}\quad a\leq\rho\leq a+c_{n}-1.
\end{eqnarray}
Since $c_{n}\geq\frac{n}{2}$, we have $c_{n}\geq n-c_{n}$. From (\ref{eq32}) we know that $s_{\rho}=s_{\rho+\delta}$
must hold for consecutive $n-c_{n}$ terms in $\textbf{s}_{n}$, and therefore
\begin{eqnarray}\label{Th32}
 s_{i}=s_{i+\delta}\quad \mbox{for}\quad 0\leq i\leq n-c_{n}-1.
\end{eqnarray}
Let $\tau=\gcd\,(\delta,n-c_{n})$, and let $u\in\mathbb{Z}_{n-c_{n}}$ and $v\in \mathbb{Z}_\delta$
be the integers such that $u\delta-v(n-c_{n})=\tau$.
Now we consider a periodic sequence $\textbf{t}^{n-c_n}=(t_0,t_1,\cdots,t_{n-c_n-1},\cdots)$ which is completely specified by the first subsequence
of length $n-c_n$ in $\textbf{s}_n$, that is, $t_i=s_{i\,(\text{mod}\,n-c_n)}$ for all $i\geq 0$.
From the definition of a periodic sequence and (\ref{Th32}) we have
$$t_{i}=t_{i+n-c_{n}} \quad \mbox{and} \quad t_{i}=t_{i+\delta}$$
hold for $0\leq i\leq n-c_{n}-1$. This implies that
$$t_{i}=t_{i+v(n-c_{n})}\quad \mbox{and} \quad t_{i}=t_{i+u\delta}$$
for all $i\geq 0$. Then with $\tau=u\delta-v(n-c_{n})$ and $t_i=s_i$ for $0\leq i\leq n-c_n-1$, we get $s_{i}=s_{i+\tau}$ for $0\leq i\leq n-c_{n}-1$. This contradicts the
fact that $(s_0,s_1,\cdots,s_{n-c_{n}-1})$ is an aperiodic finite-length sequence, and the proof is complete.
\end{proof}

\begin{remark}\label{mark2}
Theorem \ref{Main} shows that each sequence $\textbf{s}_n\in Z_{2}(n,c_{n},n-c_{n})$ is completely
determined by its first subsequence of length $n-c_{n}$, and can be further obtained by employing an
aperiodic finite-length sequence of length $n-c_{n}$. Let $\textbf{s}_{n}=(s_0,s_1,\cdots,s_{n-1})$ be
the binary sequence defined by
\begin{equation}\label{CON}
s_i=
\begin{cases}
t_{i\,(\text{mod}\,n-c_{n})},&\text{if $0\leq i\leq n-2$,}\\
t_{i\,(\text{mod}\,n-c_{n})}\oplus1,&\text{if $i=n-1$,}
\end{cases}
\end{equation}
where $\textbf{t}_{n-c_{n}}=(t_0,t_1,\cdots,t_{n-c_{n}-1})$ is an aperiodic finite-length sequence
in $\mathbb{Z}_2^{n-c_{n}}$. Then we can generate all the sequences in  $Z_{2}(n,c_{n},n-c_{n})$ by
letting $\textbf{t}_{n-c_{n}}$ run through all aperiodic finite-length sequences of length $n-c_{n}$.
\end{remark}

\subsection{The extension method for $Z_{2}(n,c_{n},d)$ with $1\leq d< n-c_{n}$}\label{Sec4.3}

By what we have already shown in Subsection \ref{Sec4.2}, all sequences
in $Z_{2}(n,c_{n},d)$ with $d=n-c_{n}$ can be constructed by applying the method presented in (\ref{CON}).
In this subsection, we propose a method to generate all sequences in $Z_{2}(n,c_{n},d)$ with $1\leq d<n-c_{n}$ by extending each sequence in $Z_{2}(c_{n}+d,c_{n},d)$ with $n-(c_{n}+d)$ terms, where $2\leq n-c_{n}\leq \frac{n}{2}$.

Let $\textbf{s}_{c_{n}+d}$ be an arbitrary sequence in $Z_{2}(c_{n}+d,c_{n},d)$, and
let $t_{1}$ and $t_{2}$ be a pair of nonnegative
integers satisfying $t_{1}+t_{2}=n-(c_{n}+d)$. An extension method to generate all sequences
in $Z_{2}(n,c_{n},d)$ proceeds as follows:
\begin{equation}\label{STEP}
\left\{\begin{aligned}
&\textrm{STEP 1}.\,\, \mathfrak{L}_{t_{1}}(\textbf{s}_{c_{n}+d}): \textbf{s}_{c_{n}+d}=(s_{0},s_{1},\cdots,s_{c_{n}+d-1})\rightarrow
L_{t_{1}}(\textbf{s}_{c_{n}+d})=(\alpha_{t_{1}-1},\cdots,\alpha_{0},\textbf{s}_{c_{n}+d});&\\
&\textrm{STEP 2}.\,\, \mathfrak{R}_{t_{2}}(L_{t_{1}}(\textbf{s}_{c_{n}+d})):
L_{t_{1}}(\textbf{s}_{c_{n}+d})\rightarrow
R_{t_{2}}(L_{t_{1}}(\textbf{s}_{c_{n}+d}))=(L_{t_{1}}(\textbf{s}_{c_{n}+d}),\beta_{0},\cdots,\beta_{t_{2}-1}),&\\
\end{aligned}\right.
\end{equation}
where $L_{t_{1}}(\textbf{s}_{c_{n}+d})$ denotes any $t_1$-term left extension sequence of $\textbf{s}_{c_{n}+d}$
such that $\alpha_{0}=s_{d-1}\oplus1$ and $(\alpha_{1},\alpha_2,\cdots, \alpha_{t_1-1})\in\mathbb{Z}_2^{t_1-1}$ if $t_{1}\geq 1$,
and $R_{t_{2}}(L_{t_{1}}(\textbf{s}_{c_{n}+d}))$ denotes any $t_2$-term right extension sequence of $L_{t_{1}}(\textbf{s}_{c_{n}+d})$ such that $(\beta_{0},\beta_1,\cdots, \beta_{t_2-1})\in\mathbb{Z}_2^{t_2}$ if $t_{2}\geq 1$.

For each pair of fixed nonnegative integers $t_{1}$ and $t_{2}$ with $t_{1}+t_{2}=n-(c_{n}+d)$,
let $\mathcal{B}_{t_{1},t_{2}}(Z_{2}(c_{n}+d,c_{n},d))$ denote the set of all
sequences obtained by extending each sequence in $Z_{2}(c_{n}+d,c_{n},d)$ according to the extension method in (\ref{STEP}).
Then we have the following proposition.

\begin{prop}\label{thmcon} Let $n$, $c_n$ and $d$ be integers such that $2\leq n-c_{n}\leq \frac{n}{2}$ and $1\leq d\leq n-c_{n}-1$.
Then we have
\begin{center}
  $Z_{2}(n,c_{n},d)=\bigcup\limits_{t_{1}+t_{2}=n-(c_{n}+d)}\mathcal{B}_{t_{1},t_{2}}(Z_{2}(c_{n}+d,c_{n},d))$.
\end{center}
\end{prop}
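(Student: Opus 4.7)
The plan is to prove the stated equality by verifying both inclusions, each amounting to tracking the two invariants ``nonlinear complexity $c_n$'' and ``distance $d$'' through the extension (\ref{STEP}) or, in the reverse direction, through a corresponding restriction to the middle block.

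For $\supseteq$, I would fix any $\textbf{s}_{c_n+d}\in Z_2(c_n+d,c_n,d)$. Since its distance is $d$, its unique identifying pair sits at positions $(p_1,p_2)=(0,d)$, so the prescribed $\alpha_0 = s_{d-1}\oplus 1$ is exactly the hypothesis of Proposition~\ref{cor1}, which handles the first left step; the remaining $t_1-1$ arbitrary left additions fall under Theorem~\ref{thm2} applied to $\textbf{u}_{c_n+d+1}=(\alpha_0, s_0, \ldots, s_{c_n+d-1})$ and cover all $t_1$ up to $c_n-d$, which is sufficient because $t_1\leq n-c_n-d\leq c_n-d$ by $n\leq 2c_n$. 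After the left stage the length is still at most $2c_n$, so Lemma~\ref{lem0}(i) keeps the complexity at $c_n$ through each right step. Finally Theorem~\ref{thm1} forces a unique identifying pair in the resulting length-$n$ sequence, and since the original pair $(0,d)$ simply shifts to $(t_1,t_1+d)$, the distance remains $d$.

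For $\subseteq$, I would take $\textbf{s}_n\in Z_2(n,c_n,d)$ with unique pair $(p_1,p_1+d)$, set $t_1=p_1$ and $t_2=n-c_n-d-p_1\geq 0$, and isolate the middle block $\textbf{s}'_{c_n+d}=(s_{p_1}, s_{p_1+1}, \ldots, s_{p_1+c_n+d-1})$. The inherited pair at its positions $(0,d)$ gives $nlc(\textbf{s}'_{c_n+d})\geq c_n$, while any longer coincidence inside $\textbf{s}'_{c_n+d}$ would embed into $\textbf{s}_n$ and contradict the distinctness of length-$c_n$ subsequences guaranteed by Lemma~\ref{lem1}(iii); hence $nlc(\textbf{s}'_{c_n+d})=c_n$, and Theorem~\ref{thm1} (valid because $c_n\geq(c_n+d)/2$) pins the distance at $d$, so $\textbf{s}'_{c_n+d}\in Z_2(c_n+d,c_n,d)$. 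Setting $\alpha_i = s_{p_1-1-i}$ and $\beta_j = s_{p_1+c_n+d+j}$ then reconstructs $\textbf{s}_n$ from (\ref{STEP}); the only constraint needing verification is that when $t_1\geq 1$ one has $s_{p_1-1}\neq s_{p_2-1}$, i.e.\ $\alpha_0 = s'_{d-1}\oplus 1$, and this again follows from Lemma~\ref{lem1}(iii), as equality would produce a second pair of identical length-$c_n$ subsequences in $\textbf{s}_n$.

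The main obstacle is not conceptual but a matter of careful bookkeeping: every invocation of Proposition~\ref{cor1}, Theorem~\ref{thm2}, and Lemma~\ref{lem0}(i) carries a numerical hypothesis that must be checked under the standing constraints $c_n\geq n/2$ and $d\leq n-c_n-1$. The critical inequalities are $c_n\geq (c_n+d+1)/2$ (to enable Theorem~\ref{thm2}), $t_1\leq c_n-d$ (to stay within its extension range), and ``length $\leq 2c_n$'' throughout the right stage; all reduce to $c_n\geq d+1$, which holds because $d\leq n-c_n-1\leq n/2-1<c_n$. The boundary case $t_1=0$ (i.e.\ $p_1=0$) is mildly special in that the $\alpha_0$ condition becomes vacuous and the argument reduces to a pure right extension controlled directly by Lemma~\ref{lem0}(i), but it is absorbed into the same framework.
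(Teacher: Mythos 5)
Your proposal is correct and follows essentially the same route as the paper: the forward inclusion via Proposition~\ref{cor1}/Theorem~\ref{thm2} for the left stage, Lemma~\ref{lem0}(i) for the right stage, and Theorem~\ref{thm1} to pin the distance; the reverse inclusion by excising the middle block $S_{p_1}^{p_2+c_n-1}$ and using Lemma~\ref{lem1}(iii) to verify both its complexity and the condition $\alpha_0=s_{p_2-1}\oplus 1$. The numerical bookkeeping you flag ($c_n\geq(c_n+d+1)/2$, $t_1\leq c_n-d$, length at most $2c_n$) is exactly what the paper checks, so no gap remains.
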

\begin{proof}

For each pair of nonnegative integers $t_1$ and $t_2$ with $t_{1}+t_{2}=n-(c_{n}+d)$, a sequence
$\textbf{e}_{n}\in \mathcal{B}_{t_{1},t_{2}}(Z_{2}(c_{n}+d,c_{n},d))$ has the following general form
$$\textbf{e}_n=R_{t_{2}}(L_{t_{1}}(\textbf{s}_{c_{n}+d}))=(\alpha_{t_{1}-1},\cdots,\alpha_{0},s_{0},s_{1},\cdots,s_{c_{n}+d-1},\beta_{0},\beta_1,\cdots, \beta_{t_2-1}),$$
where $(s_{0},s_{1},\cdots,s_{c_{n}+d-1})$ is a certain sequence in $Z_{2}(c_{n}+d,c_{n},d)$, and 
$\alpha_i$ with $0\leq i\leq t_1-1$ and $\beta_j$ with $0\leq j\leq t_2-1$ satisfy the conditions in the extension method in (\ref{STEP}).

Since $\textbf{s}_{c_{n}+d}\in Z_{2}(c_{n}+d,c_{n},d)$ and $c_n\geq \frac{n}{2}\geq \frac{c_n+d+1}{2}$, the unique pair of integers such that the condition in (\ref{eq4}) holds for $\textbf{s}_{c_{n}+d}$ is obviously $p_1=0$ and $p_2=d$.
Then by Theorem \ref{thm2} the nonlinear complexity of $L_{t_{1}}(\textbf{s}_{c_{n}+d})$ is equal to $c_n$ since $\alpha_{0}\neq s_{d-1}$. Furthermore, for the sequence $\textbf{e}_n=R_{t_{2}}(L_{t_{1}}(\textbf{s}_{c_{n}+d}))$,
$c_n\geq \frac{n}{2}=\frac{c_n+d+t_1+t_2}{2}$ and Lemma \ref{lem0} (i) ensure that its nonlinear complexity is still equal to $c_n$.
On the other hand, because $c_n\geq \frac{n}{2}$, there is only a pair of identical subsequences of length $c_n-1$ with different successors in $\textbf{e}_n$ and it is exactly the only pair of such subsequences in $\textbf{s}_{c_n+d}$,
which implies $Dis(\textbf{e}_n)=Dis(\textbf{s}_{c_{n}+d})=d$.
Thus we have $\textbf{e}_{n}\in Z_{2}(n,c_{n},d)$, and hence
$$\bigcup\limits_{t_{1}+t_{2}=n-(c_{n}+d)}\mathcal{B}_{t_{1},t_{2}}(Z_{2}(c_{n}+d,c_{n},d))\subset Z_{2}(n,c_{n},d).$$

Conversely, for any sequence $\textbf{s}_{n}\in Z_{2}(n,c_{n},d)$, from Theorem~\ref{thm1} and (\ref{eq4}), there exists a unique pair of integers $p_{1},p_{2}$ with $0\leq p_{1}<p_{2}\leq n-c_{n}$ and $p_{2}-p_{1}=d$ such that
$$S_{p_{1}}^{p_{1}+c_{n}-2} = S_{p_{2}}^{p_{2}+c_{n}-2}\quad \text{and}\quad s_{p_{1}+c_{n}-1}\neq s_{p_{2}+c_{n}-1}.$$
Since $c_n\geq\frac{n}{2}$, all the subsequences of length $c_n$ in $\textbf{s}_n$ are
distinct according to Lemma \ref{lem1} (iii). Therefore, $s_{p_{1}-1}\neq s_{p_{2}-1}$ and the subsequence
$$S_{p_{1}}^{p_{2}+c_{n}-1}=(s_{p_{1}},s_{p_{1}+1},\cdots, s_{p_{2}+c_{n}-1})$$
of $\textbf{s}_{n}$ itself is a sequence of length $c_{n}+d$ with nonlinear complexity $c_{n}$ and distance $d$, that is to say, $S_{p_{1}}^{p_{2}+c_{n}-1}\in Z_{2}(c_{n}+d,c_{n},d)$. It implies that $\textbf{s}_{n}\in\mathcal{B}_{t_1,\,t_{2}}(Z_{2}(c_{n}+d,c_{n},d))$ with $t_1=p_1$ and $t_2=n-c_{n}-p_{2}$, and hence
 $$Z_{2}(n,c_{n},d)\subset \bigcup\limits_{t_{1}+t_{2}=n-(c_{n}+d)}\mathcal{B}_{t_{1},t_{2}}(Z_{2}(c_{n}+d,c_{n},d)).$$
This completes the proof of Proposition \ref{thmcon}.
\end{proof}

Proposition \ref{thmcon} shows that if $\textbf{s}_{c_{n}+d}$ runs through all the sequences in $Z_{2}(c_{n}+d,c_{n},d)$ and $t_1$, $t_2$ run through all the nonnegative integers satisfying $t_{1}+t_{2}=n-(c_{n}+d)$, then all the sequences in $Z_{2}(n,c_{n},d)$ are obtained by the extension method presented in (\ref{STEP}).

Based on the construction proposed in Section \ref{Sec4}, we are able to propose an algorithm that generates all binary sequences with given length $n$ and nonlinear complexity $c_n\geq\frac{n}{2}$.
\begin{algorithm}[H]
	\renewcommand{\algorithmicrequire}{\textbf{Input:}}
	\renewcommand{\algorithmicensure}{\textbf{Output:}}
	\caption{Construction of binary sequences $\textbf{s}_{n}$ with $nlc(\textbf{s}_{n})\geq\frac{n}{2}$.}
	\label{alg:Construction}
	\begin{algorithmic}[1]
		\Require Two positive integers $n$ and $c_{n}$ with $\frac{n}{2}\leq c_{n}\leq n-1$.
		
		\Ensure $Z_{2}(n,c_{n})$ and its cardinality.
		\State $Z_{2}(n,c_{n})\gets\emptyset$; $count\gets0$  \Comment{initialize $Z_{2}(n,c_{n})$ and its cardinality}
		\For {$d=1$ to $n-c_{n}$}
		\For {$(t_{0},t_1,\cdots,t_{d-1})\in\mathbb{Z}_2^{d}$}
        \While {$(t_{0},t_1,\cdots,t_{d-1})$ is an aperiodic finite-length sequence}
        \For {$i=0$ to $c_{n}+d-2$} \Comment{rule in (\ref{CON})}
        \State $s_{i}\gets t_{i\,(\text{mod}\,d)}$
        \EndFor
        \State $s_{c_{n}+d-1}\gets t_{c_{n}+d-1\,(\text{mod}\,d)\,}\oplus1$
        \State $\textbf{s}_{c_{n}+d}\gets(s_{0},s_1,\cdots,s_{c_{n}+d-1})$
        \For {$m=0$ to $n-(c_{n}+d)$} \Comment{extension method in (\ref{STEP})}
        \If {$m=0$}
        \State $\textbf{s}_{n}\gets R_{n-(c_{n}+d)}(L_{0}(\textbf{s}_{c_{n}+d}))$
        \Else
        \State $\textbf{s}_{n}\gets R_{n-(c_{n}+d)-m}(L_{m-1}(s_{d-1}\oplus1,\textbf{s}_{c_{n}+d}))$
        \EndIf
        \State append the sequence $\textbf{s}_{n}$ to $Z_{2}(n,c_{n})$
        \State $count\gets count+1$
        \EndFor
        \EndWhile
        \EndFor
        \EndFor

		\State \Return the set: $Z_{2}(n,c_{n})$; the cardinality of $Z_{2}(n,c_{n})$: $count$
		
	\end{algorithmic}
\end{algorithm}
\begin{example}\label{example2}

To illustrate our construction, we give an example for the case $n=8$ and $c_{n}=4$.
By first using the method presented in (\ref{CON}), we can obtain all the sequences in $Z_2(5,4,1)$, $Z_2(6,4,2)$,
$Z_2(7,4,3)$ and $Z_2(8,4,4)$, and then by applying the extension method given in (\ref{STEP}), each sequence in $Z_2(5,4,1)$, $Z_2(6,4,2)$ and $Z_2(7,4,3)$ is extended to several
sequences in $Z_2(8,4,1)$, $Z_2(8,4,2)$ and $Z_2(8,4,3)$, respectively.
Thus, we get a total of $86$ sequences of length $8$ and nonlinear complexity $4$,  and all these sequences are listed in Table \ref{Tab01}. Our result is consistent
with that of exhaustive search given in \cite[Table~3.1]{Jansen1}.

\begin{table}[!h]
\caption{Binary sequences of length $8$ with nonlinear complexity $4$.}
\scriptsize
\begin{center}
\begin{tabular}{|l|l|l|l| p{1cm}|}
\hline
\hline
$d$ & $Z_{2}(d+4,4,d)$ & $Z_{2}(8,4,d)$   &$\mathcal{N}$\\ \hline
\hline
\multirow{10}{*}{1} &  \multirow{5}{*}{00001}
  & (00001)000, (00001)001, (00001)010, (00001)011  & \multirow{10}{*}{40}  \\

 && (00001)100, (00001)101, (00001)110, (00001)111  &  \\
 && 1(00001)00, 1(00001)01, 1(00001)10, 1(00001)11  &  \\
 && 01(00001)0, 01(00001)1, 11(00001)0, 11(00001)1  &  \\
 && 001(00001), 011(00001), 101(00001), 111(00001)  &  \\
 \cline{2-3}
 &\multirow{5}{*}{11110}
  & (11110)111, (11110)110, (11110)101, (11110)100  &  \\
 && (11110)011, (11110)010, (11110)001, (11110)000  &  \\
 && 0(11110)11, 0(11110)10, 0(11110)01, 0(11110)00  &  \\
 && 10(11110)1, 10(11110)0, 00(11110)1, 00(11110)0  &  \\
 && 110(11110), 100(11110), 010(11110), 000(11110)  &  \\
 \hline
\multirow{4}{*}{2} &  \multirow{2}{*}{010100}
  & (010100)00, (010100)01, (010100)10, (010100)11  & \multirow{4}{*}{16}\\
 && 0(010100)0, 0(010100)1, 00(010100), 10(010100)  &  \\

\cline{2-3}
& \multirow{2}{*}{101011}
  & (101011)11, (101011)10, (101011)01, (101011)00  &  \\
 && 1(101011)1, 1(101011)0, 11(101011), 01(101011)  &  \\

\hline

\multirow{6}{*}{3} &  0010011
& (0010011)0, (0010011)1, 0(0010011)
 &\multirow{6}{*}{18}\\

\cline{2-3}
& 1101100 & (1101100)1, (1101100)0, 1(1101100) & \\
\cline{2-3}
& 0100101 & (0100101)0, (0100101)1, 1(0100101) & \\

\cline{2-3}
& 1011010 & (1011010)1, (1011010)0, 0(1011010) &\\
\cline{2-3}
& 0110111 & (0110111)0, (0110111)1, 0(0110111) &\\
\cline{2-3}
& 1001000 & (1001000)1, (1001000)0, 1(1001000) &\\
 \hline

\multirow{2}{*}{4} &  \multirow{2}{*}{$Z_{2}(8,4,4)$}
 & 00010000, 00100011, 00110010, 01000101, 01100111, 01110110     & \multirow{2}{*}{12}\\
&& 11101111, 11011100, 11001101, 10111010, 10011000, 10001001     &  \\

\hline
\hline
\end{tabular}
\end{center}
\label{Tab01}
\end{table}

\end{example}
\section{Enumeration and distribution of nonlinear complexity}\label{Sec5}

In this section, we shall count the number of all sequences with length $n$ and
nonlinear complexity $c_n\geq \frac{n}{2}$. To this end, we first study the
cardinality of $Z_{2}(n,c_{n},n-c_{n})$.

From Theorem \ref{Main} and Remark \ref{mark2}, $|Z_{2}(n,c_{n},n-c_{n})|$ is equal to the number of all aperiodic finite-length sequences of length $n-c_{n}$, so it only depends on the value of $n-c_n$.
The following proposition gives the exact formula for $|Z_{2}(n,c_n,n-c_n)|$, which is obtained by subtracting the number of periodic finite-length sequences from the total number $2^{n-c_{n}}$ of sequences of length $n-c_{n}$.

\begin{prop}\label{propNum} Let $n$ and $c_{n}$ be two positive integers with $\frac{n}{2}\leq c_{n}\leq n-1$.

{\rm (i) }
If $n-c_{n}=1$, then $|Z_{2}(n,c_{n},n-c_{n})|=2$.

{\rm (ii) }
If $n-c_{n}>1$, let $n-c_{n}=\prod\limits^{t}_{j=1}p_{j}^{k_{j}}$ be the standard factorization of $n-c_{n}$,
where $p_1,p_2,\cdots,$ $p_t$ are distinct prime numbers and $k_1,k_2,\cdots,k_t$ are positive integers. Then we have
$$|Z_{2}(n,c_{n},n-c_{n})|=2^{n-c_{n}}-\sum\limits^{t}_{\tau=1}(-1)^{\tau-1}\sum\limits_{1\leq j_{1}<j_{2}<\cdots<j_{\tau}\leq t}2^{\frac{n-c_{n}}{p_{j_{1}}p_{j_{2}}\cdots p_{j_{\tau}}}}.$$
\end{prop}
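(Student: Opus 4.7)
The plan is to reduce the enumeration problem to counting aperiodic binary strings of length $N:=n-c_n$, and then to apply inclusion-exclusion over the maximal proper divisors of $N$. By Theorem~\ref{Main} together with Remark~\ref{mark2}, the map sending an aperiodic sequence $\textbf{t}_{N}\in\mathbb{Z}_2^{N}$ to the sequence $\textbf{s}_n$ defined via (\ref{CON}) is a bijection between aperiodic binary strings of length $N$ and $Z_2(n,c_n,N)$. Case (i) is immediate because when $N=1$ there is no proper positive divisor of $1$, so both strings $(0)$ and $(1)$ are aperiodic by Definition~\ref{aper}. For case (ii) it therefore suffices to count periodic strings and subtract from $2^N$.

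For every divisor $e$ of $N$ with $e<N$, let $A_e\subseteq\mathbb{Z}_2^N$ denote the set of strings $(s_0,\ldots,s_{N-1})$ with $s_{i+e}=s_i$ for $0\le i\le N-e-1$; equivalently, those strings of the form $(s_0,\ldots,s_{e-1})^{N/e}$. Clearly $|A_e|=2^{e}$, and the set of periodic strings of length $N$ is $\bigcup_{e\mid N,\,e<N}A_e$. The first key observation is that if $e\mid e'$ and $e'\mid N$ then $A_e\subseteq A_{e'}$, so this union collapses to the union over the maximal proper divisors of $N$, namely $\bigcup_{j=1}^{t}A_{N/p_j}$.

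The second key observation concerns intersections: for any $\emptyset\ne J\subseteq\{1,2,\ldots,t\}$, a string lies in $\bigcap_{j\in J}A_{N/p_j}$ exactly when it is simultaneously periodic of period $N/p_j$ for every $j\in J$. A short lemma shows that if a string has two periods $e_1$ and $e_2$ both dividing its length, then it also has period $\gcd(e_1,e_2)$; iterating and using that the $p_j$ are distinct primes gives
\[
\bigcap_{j\in J}A_{N/p_j}=A_{N/\prod_{j\in J}p_j},\qquad\bigl|\bigcap_{j\in J}A_{N/p_j}\bigr|=2^{N/\prod_{j\in J}p_j}.
\]
Substituting this into the inclusion-exclusion expansion of $\bigl|\bigcup_{j=1}^{t}A_{N/p_j}\bigr|$ and subtracting from $2^N$ yields precisely the claimed formula.

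The step most likely to require care is the second observation — justifying that the periods combine via $\gcd$ over a finite-length (not infinite-periodic) string, since the periodicity condition $s_{i+e}=s_i$ only holds in the range $0\le i\le N-e-1$. The clean way around this is to pass to the auxiliary infinite sequence $t_i=s_{i\bmod N}$ as in the proof of Theorem~\ref{Main}: the condition $s\in A_{N/p_j}$ forces $t$ to be invariant under shifts by both $N/p_j$ and $N$, hence by their $\gcd$; iterating through a Bézout argument, $t$ is invariant under shift by $N/\prod_{j\in J}p_j$, which translates back to membership in $A_{N/\prod_{j\in J}p_j}$. Once this is in place, the remainder of the argument is a routine application of inclusion-exclusion.
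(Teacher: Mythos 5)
Your proposal is correct and follows essentially the same route as the paper: reduce via Theorem~\ref{Main} and Remark~\ref{mark2} to counting aperiodic strings of length $n-c_n$, then apply inclusion-exclusion over the sets of strings with period $(n-c_n)/p_j$ for the prime divisors $p_j$. The only difference is that you spell out two steps the paper states without detail (the collapse of the union to maximal proper divisors, and the gcd/B\'ezout argument identifying the intersections), which is a welcome but not structurally different elaboration.
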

\begin{proof}
The case $n-c_{n}=1$ is trivial by Lemma \ref{lem1} (ii).
For the case $n-c_{n}>1$, we determine $|Z_{2}(n,c_{n},n-c_{n})|$ by means of the principle of inclusion and exclusion in Combinatorial Mathematics \cite{LIU}.

Let $U$ be the set of all finite-length sequences of length $n-c_{n}$ over $\mathbb{Z}_2$.
It is obvious that $|U|=2^{n-c_{n}}$. Now we consider the property of a periodic finite-length sequence.
From Definition \ref{aper}, we know that a periodic finite-length sequence $\textbf{s}_{n-c_{n}}$
must have the form
\begin{eqnarray}\label{INEX}
\textbf{s}_{n-c_{n}}=(s_0,s_1,\cdots,s_{e-1})^{\frac{n-c_{n}}{e}},
\end{eqnarray}
where $e$ is a positive divisor of $n-c_{n}$ with $e<n-c_{n}$.

For $1\leq j\leq t$, let $X_{j}$ denote the property that a sequence $\textbf{s}_{n-c_{n}}$ has the form in (\ref{INEX})
with $\frac{n-c_{n}}{e}=p_{j}$, and let $P(X_{j})$ denote the set of all the sequences in $U$ that
possess the property $X_j$. Then it follows that $|P(X_{j})|=2^e=2^{\frac{n-c_{n}}{p_j}}$.

Let $P(X_{j_{1}},X_{j_{2}},\cdots,X_{j_{\tau}})$ be the set of all sequences that
possess each of the properties $X_{j_{1}},X_{j_{2}},\cdots,X_{j_{\tau}}$, defined as
$$P(X_{j_{1}},X_{j_{2}},\cdots,X_{j_{\tau}})=\bigcap\limits_{j\in\{j_{1},j_{2},\cdots,j_{\tau}\}}P(X_{j}).$$
Since $p_{j_1}, p_{j_2},\cdots, p_{j_\tau}$ are distinct prime numbers, it follows that
$P(X_{j_{1}},X_{j_{2}},\cdots,X_{j_{\tau}})$ consists of all the sequences having the form
in (\ref{INEX}) with $\frac{n-c_{n}}{e}=p_{j_{1}}p_{j_{2}}\cdots p_{j_{\tau}}$, and thus we have
\begin{eqnarray}\label{INEX2}
|P(X_{j_{1}},X_{j_{2}},\cdots,X_{j_{\tau}})|=2^{\frac{n-c_{n}}{p_{j_{1}}p_{j_{2}}\cdots p_{j_{\tau}}}}.
\end{eqnarray}
Applying the principle of inclusion and exclusion, we obtain
\begin{eqnarray*}
|Z_{2}(n,c_{n},n-c_n)|&=&|\bigcap\limits_{1\leq j\leq t}\overline{P}(X_{j})|\\
            &=&|U|-|\bigcup\limits_{1\leq j\leq t}P(X_{j})|\\
            &=&2^{n-c_n}-\sum\limits_{1\leq j_{1}\leq t}|P(X_{j_{1}})|+\sum\limits_{1\leq j_{1}< j_{2}\leq t}|P(X_{j_{1}},X_{j_{2}})|+\cdots+\\
            &&(-1)^{\tau}\sum\limits_{1\leq j_{1}< \cdots< j_{\tau}\leq t}|P(X_{j_{1}},\cdots,X_{j_{\tau}})|
            +\cdots+(-1)^{t}|P(X_{1},X_{2},\cdots,X_{t})|.
\end{eqnarray*}
Together with (\ref{INEX2}), the desired result follows.
\end{proof}
Now we can further determine the cardinality of $Z_{2}(n,c_{n})$.
The following theorem gives a formula for $|Z_{2}(n,c_{n})|$.

\begin{thm}\label{Numb} Given two positive integers $n$ and $c_{n}$ with $\frac{n}{2}\leq c_{n}\leq n-1$, the total number of sequences with length $n$ and
nonlinear complexity $c_{n}$ is given by
\begin{center}
  $|Z_{2}(n,c_{n})|=\sum\limits^{n-c_{n}}_{d=1}(n-c_{n}-d+2)2^{n-c_{n}-d-1}|Z_{2}(c_{n}+d,c_{n},d)|$.
\end{center}
\end{thm}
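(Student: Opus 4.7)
The plan is to combine the disjoint partition $Z_{2}(n,c_{n})=\bigcup_{d=1}^{n-c_{n}}Z_{2}(n,c_{n},d)$ from equation~(\ref{DIRE}) with Proposition~\ref{thmcon}, reducing the problem to counting $|Z_{2}(n,c_{n},d)|$ separately for each distance $d$. Since the classes in the partition are disjoint by the definition of $Dis(\textbf{s}_n)$, one has $|Z_{2}(n,c_{n})|=\sum_{d=1}^{n-c_{n}}|Z_{2}(n,c_{n},d)|$, so it suffices to establish the per-$d$ identity
\[
|Z_{2}(n,c_{n},d)|=(n-c_{n}-d+2)\cdot 2^{n-c_{n}-d-1}\cdot|Z_{2}(c_{n}+d,c_{n},d)|.
\]
The boundary case $d=n-c_n$ reduces to a tautology, since then $N:=n-(c_n+d)=0$ and the coefficient $(N+2)2^{N-1}=1$.

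For $1\leq d<n-c_{n}$, I would fix $\textbf{s}_{c_{n}+d}\in Z_{2}(c_{n}+d,c_{n},d)$ and count how many sequences the extension scheme in (\ref{STEP}) produces from it. Writing $N=n-(c_{n}+d)$ and summing over ordered splits $t_{1}+t_{2}=N$: when $t_{1}=0$ the left-extension is empty so only the $t_{2}=N$ right symbols are free, yielding $2^{N}$ sequences; when $t_{1}\geq 1$ the symbol $\alpha_{0}=s_{d-1}\oplus 1$ is forced by the rule in (\ref{STEP}), while the remaining $t_{1}-1$ left and $t_{2}$ right symbols are free, yielding $2^{t_{1}-1}\cdot 2^{t_{2}}=2^{N-1}$ sequences. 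Summing gives
\[
2^{N}+\sum_{t_{1}=1}^{N}2^{N-1}=2^{N}+N\cdot 2^{N-1}=(N+2)\cdot 2^{N-1}
\]
extended sequences per choice of $\textbf{s}_{c_{n}+d}$.

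The main obstacle, and the step that requires the most care, is injectivity: Proposition~\ref{thmcon} only asserts that the union of the $\mathcal{B}_{t_{1},t_{2}}$'s equals $Z_{2}(n,c_{n},d)$, not that the count $|Z_{2}(c_{n}+d,c_{n},d)|\cdot(N+2)\cdot 2^{N-1}$ avoids overcounting. I would invoke Theorem~\ref{thm1} and Remark~\ref{mark1}: any $\textbf{s}_{n}\in Z_{2}(n,c_{n},d)$ has a \emph{unique} pair $(p_{1},p_{2})$ with $p_{2}-p_{1}=d$ satisfying (\ref{eq4}). This pair determines $t_{1}=p_{1}$ and $t_{2}=n-c_{n}-p_{2}$ uniquely, the central window $S_{p_{1}}^{p_{2}+c_{n}-1}$ uniquely recovers the underlying $\textbf{s}_{c_{n}+d}\in Z_{2}(c_{n}+d,c_{n},d)$, and the remaining coordinates of $\textbf{s}_{n}$ recover the $\alpha_{i}$'s and $\beta_{j}$'s. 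Hence the map from $(\textbf{s}_{c_{n}+d},t_{1},t_{2},\alpha_{*},\beta_{*})$ to $\textbf{s}_{n}$ is a bijection onto $Z_{2}(n,c_{n},d)$, so the counts multiply as desired.

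Combining the per-$d$ identity with the partition gives
\[
|Z_{2}(n,c_{n})|=\sum_{d=1}^{n-c_{n}}(n-c_{n}-d+2)\,2^{n-c_{n}-d-1}\,|Z_{2}(c_{n}+d,c_{n},d)|,
\]
which is the claimed formula. Note that $|Z_{2}(c_{n}+d,c_{n},d)|$ on the right is exactly $|Z_{2}(c_{n}+d,c_{n},(c_{n}+d)-c_{n})|$, i.e., a maximum-distance cardinality, so Proposition~\ref{propNum} makes the formula fully explicit.
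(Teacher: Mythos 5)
Your proposal is correct and follows essentially the same route as the paper: partition by distance $d$ via (\ref{DIRE}), count $2^{N}$ extensions for $t_{1}=0$ and $2^{N-1}$ for each $t_{1}\geq 1$ to get the coefficient $(N+2)2^{N-1}$, and use the uniqueness of the pair $(p_{1},p_{2})$ from Theorem~\ref{thm1} to rule out overcounting (the paper phrases this as disjointness of the sets $\mathcal{B}_{t_{1},t_{2}}$ for distinct $(t_{1},t_{2})$, which is the same fact). No gaps.
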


\begin{proof}
We recall from Subsection \ref{Sec4.3} that for each pair of fixed nonnegative integers $t_{1}$ and $t_{2}$
with $t_{1}+t_{2}=n-(c_{n}+d)$, $\mathcal{B}_{t_{1},t_{2}}(Z_{2}(c_{n}+d,c_{n},d))$ denotes the set of all
the sequences obtained by extending each sequence in $Z_{2}(c_{n}+d,c_{n},d)$ according to the extension method in (\ref{STEP}).
Note that the number of sequences generated by extending a fixed sequence $\textbf{s}_{c_n+d} \in Z_{2}(c_{n}+d,c_{n},d)$
is $2^{t_1+t_2}=2^{n-(c_n+d)}$ if $t_1=0$ and $2^{t_1+t_2-1}=2^{n-(c_n+d)-1}$ otherwise, because $\alpha_{i}$ and $\beta_{j}$, where  $1\leq i\leq t_1-1$ and $0\leq j\leq t_2-1$, can take any element in $\mathbb{Z}_2$. Thus we get
\begin{equation}\label{number}
|\mathcal{B}_{t_{1},t_{2}}(Z_{2}(c_{n}+d,c_{n},d))|=
\begin{cases}
2^{n-(c_{n}+d)}|Z_{2}(c_{n}+d,c_{n},d)|,&\text{if $t_{1}=0$,}\\
2^{n-(c_{n}+d)-1}|Z_{2}(c_{n}+d,c_{n},d)|,&\text{if $t_{1}\neq0$.}
\end{cases}
\end{equation}

In addition, for $(t'_1,t'_2)\neq(t_1,t_2)$, we must have
\begin{eqnarray}\label{disjoint}
\mathcal{B}_{t'_{1},t'_{2}}(Z_{2}(c_{n}+d,c_{n},d))\cap\mathcal{B}_{t_{1},t_{2}}(Z_{2}(c_{n}+d,c_{n},d))=\emptyset.
\end{eqnarray}
For otherwise, a sequence $\textbf{v}_n\in \mathcal{B}_{t'_{1},t'_{2}}(Z_{2}(c_n+d,c_n,d))\cap \mathcal{B}_{t_{1},t_{2}}(Z_{2}(c_n+d,c_n,d))$
would imply that there are two pairs of identical subsequences of length $c_n-1$ with different
successors in $\textbf{v}_n$, a contradiction.
Combining (\ref{DIRE}), (\ref{number}), (\ref{disjoint}) and Proposition~\ref{thmcon}, we arrive at
\begin{eqnarray*}
|Z_{2}(n,c_{n})|&=&\sum\limits_{d=1}^{n-c_{n}}|Z_{2}(n,c_{n},d)|\\
            &=&|Z_{2}(n,c_n,n-c_{n})|+\sum\limits_{d=1}^{n-c_{n}-1}
            \sum\limits_{t_{1}+t_{2}=n-(c_{n}+d)}|\mathcal{B}_{t_{1},t_{2}}(Z_{2}(c_{n}+d,c_{n},d))|\\
                        &=&|Z_{2}(n,c_n,n-c_{n})|+\sum\limits_{d=1}^{n-c_{n}-1}
            {[2^{n-c_n-d}+(n-c_n-d)2^{n-c_n-d-1}]|Z_{2}(c_{n}+d,c_{n},d)|}\\
            &=&\sum\limits_{d=1}^{n-c_{n}}(n-c_{n}-d+2)2^{n-c_{n}-d-1}|Z_{2}(c_{n}+d,c_{n},d)|.
\end{eqnarray*}
\end{proof}

From the formula in Theorem \ref{Numb}, we find that for fixed $n-c_n$, the
cardinality of $Z_{2}(n,c_{n})$ is determined by each $|Z_{2}(c_{n}+d,c_{n},d)|$ with $1\leq d\leq n-c_n$.
Since $d\leq n-c_n\leq c_n$, it follows that $c_n\geq \frac{c_n+d}{2}$, and so $|Z_{2}(c_{n}+d,c_{n},d)|$
can be calculated by the formula in Proposition \ref{propNum}.
As we have noted, $|Z_{2}(c_{n}+d,c_{n},d)|$ is uniquely determined by $d$, so that $|Z_{2}(n,c_{n})|$ is uniquely determined by the value of $n-c_{n}$.
Thus we get the following result on the distribution of the nonlinear complexity for finite-length binary sequences under the condition $c_{n}\geq\frac{n}{2}$.

\begin{cor}\label{Distr} Let $n$ and $c_{n}$ be two positive integers with $\frac{n}{2}\leq c_{n}\leq n-1$.
Then for any integer $\delta\geq n-2c_n$, we have
\begin{eqnarray}\label{distrib}
  |Z_{2}(n+\delta,c_{n}+\delta)|=|Z_{2}(n,c_{n})|.
\end{eqnarray}
\end{cor}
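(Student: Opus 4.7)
The plan is to derive the corollary from the two counting formulas already established, namely Theorem \ref{Numb} and Proposition \ref{propNum}. The strategy is to show that, under the hypothesis $c_n\geq n/2$, the cardinality $|Z_{2}(n,c_{n})|$ depends only on the difference $g:=n-c_{n}$; once this invariance is in hand, the claimed identity follows immediately, since the shift $(n,c_n)\mapsto(n+\delta,c_n+\delta)$ preserves $g$, and the restriction $\delta\geq n-2c_n$ is exactly what is needed so that the shifted pair still satisfies $c_n+\delta\geq(n+\delta)/2$, keeping the earlier formulas applicable.

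The first step is to examine the summands in Theorem \ref{Numb}. Each $|Z_{2}(c_{n}+d,c_{n},d)|$ appearing there is a maximum-distance-class cardinality, since the distance $d$ coincides with $(c_n+d)-c_n$. Because $d$ ranges over $1\leq d\leq n-c_n\leq c_n$, one has $c_n\geq(c_n+d)/2$, so Proposition \ref{propNum} applies and tells us that $|Z_{2}(c_{n}+d,c_{n},d)|$ is a function of $d$ alone, independent of $c_n$. Substituting $g=n-c_n$ into Theorem \ref{Numb} gives
$$|Z_{2}(n,c_{n})|=\sum_{d=1}^{g}(g-d+2)\,2^{g-d-1}\,|Z_{2}(c_{n}+d,c_{n},d)|,$$
and every ingredient on the right — the summation range, the coefficient $(g-d+2)2^{g-d-1}$, and each factor $|Z_{2}(c_{n}+d,c_{n},d)|$ — depends only on $g$ and $d$. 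Hence $|Z_{2}(n,c_{n})|$ is a function of $g$ alone.

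Finally, I apply this invariance to the pair $(n+\delta,c_{n}+\delta)$. The inequality $\delta\geq n-2c_n$ rearranges to $c_n+\delta\geq(n+\delta)/2$, so Theorem \ref{Numb} and the observation above are valid at the shifted pair. Since $(n+\delta)-(c_n+\delta)=n-c_n=g$, the formula returns the same value, establishing $|Z_{2}(n+\delta,c_{n}+\delta)|=|Z_{2}(n,c_{n})|$. There is no genuine obstacle here; the substantive work has already been absorbed into Theorem \ref{Numb} and Proposition \ref{propNum}. The only point to double-check is the bookkeeping on the hypothesis ranges, in particular that $d\leq n-c_n\leq c_n$ (which uses $c_n\geq n/2$) so that Proposition \ref{propNum} is actually applicable to every summand.
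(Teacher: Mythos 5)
Your proposal is correct and follows essentially the same route as the paper: both apply Theorem \ref{Numb} to the original and the shifted pair (using $\delta\geq n-2c_n$ to guarantee $c_{n}+\delta\geq\frac{n+\delta}{2}$), and then invoke Proposition \ref{propNum} to see that each summand $|Z_{2}(c_{n}+d,c_{n},d)|$ depends only on $d$, so the two sums coincide. Your framing of this as ``$|Z_{2}(n,c_{n})|$ is a function of $n-c_{n}$ alone'' is exactly the observation the paper makes in the paragraph preceding the corollary.
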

\begin{proof}
With $\delta\geq n-2c_n$, we get $2(c_{n}+\delta)-(n+\delta)=2c_{n}-n+\delta\geq0$,
which implies $c_{n}+\delta\geq\frac{n+\delta}{2}$. Therefore, by Theorem~\ref{Numb}, we have
$$|Z_{2}(n+\delta,c_{n}+\delta)|=\sum\limits^{n-c_{n}}_{d=1}(n-c_{n}-d+2)2^{n-c_{n}-d-1}|Z_{2}(c_{n}+\delta+d,c_{n}+\delta,d)|.$$
Since $c_{n}\geq\frac{n}{2}$, it follows that
$$|Z_{2}(n,c_{n})|=\sum\limits^{n-c_{n}}_{d=1}(n-c_{n}-d+2)2^{n-c_{n}-d-1}|Z_{2}(c_{n}+d,c_{n},d)|.$$
From Proposition \ref{propNum}, we have $|Z_{2}(c_{n}+\delta+d,c_{n}+\delta,d)|=|Z_{2}(c_{n}+d,c_{n},d)|$
for a fixed $d$, and the desired result follows immediately.
\end{proof}

\begin{remark}\label{Evenl}
The result of Corollary \ref{Distr} is equivalent to that of Proposition 3.30 in \cite{Jansen1} and of
Theorem 11 in \cite{RK}. But we prove this property in a different way.
By substituting $\delta$ by $n-2c_n$ in (\ref{distrib}), we get $$|Z_{2}(n,c_{n})|=|Z_{2}(2(n-c_{n}),n-c_{n})|.$$
Therefore, to determine the exact value of $|Z_{2}(n,c_{n})|$ for all integers
$n\geq 3$ and $c_n\geq \frac{n}{2}$, we only need to calculate $|Z_{2}(n,\frac{n}{2})|$ for all even $n$.
\end{remark}


\begin{example}\label{example3}
By applying the formulas in Theorem \ref{Numb} and Proposition \ref{propNum}, we calculate the value of $|Z_{2}(n,\frac{n}{2})|$ for even $n$ with $2\leq n\leq 48$, which are given in Table \ref{distab}.
From the value of $|Z_{2}(n,\frac{n}{2})|$ listed in Table \ref{distab}, one can obtain the value of $|Z_{2}(n+\delta,\frac{n}{2}+\delta)|$ for all integers $\delta$.
It is verified that for length $n=2,4,\cdots,24$, our result is consistent with the result obtained by the exhaustive search in \cite[Table 3.1]{Jansen1}.

\begin{table}[!h]
			 \caption {The cardinality of $Z_{2}(n,\frac{n}{2})$ with even $n$.}
             \footnotesize
			 \label{distab}
			 \begin{center}	
			 \begin{tabular}{|c|c||c|c||c|c||c|c| p{1cm}|}
			 \hline
			 \hline
			 $n$ & $|Z_{2}(n,\frac{n}{2})|$  & $n$  & $|Z_{2}(n,\frac{n}{2})|$  & $n$ & $|Z_{2}(n,\frac{n}{2})|$& $n$ & $|Z_{2}(n,\frac{n}{2})|$\\
					\hline
					\hline
					 $2$  & $2$     & $14$  & $1792$      & $26$    & $363874$  &  $38$  & $48642118$\\
					 \hline
					 $4$  & $8$     & $16$   & $4562$     & $28$    & $839312$  &  $40$  & $107569382$\\
                     \hline
					 $6$  & $28$    & $18$   & $11344$    & $30$    & $1918228$ &  $42$  & $236758534$\\
					 \hline
					 $8$  & $86$    & $20$   & $27614$    & $32$    & $4348214$ &  $44$  & $518851844$\\
					 \hline
					 $10$  & $250$  & $22$   & $66136$    & $34$    & $9785734$ &  $46$  & $1132569592$\\
					 \hline
					 $12$  & $680$  & $24$   & $156062$   & $36$    & $21880586$&  $48$  & $2463255266$\\
					
                     \hline
					 \hline
			 \end{tabular}
			\end{center}
\end{table}
\end{example}

\begin{remark}
 Erdmann and Murphy in \cite{MANN} proposed an approximate probability distribution for nonlinear complexity.
The accuracy of the approximation for $n\leq 24$ was confirmed by the numerical results in \cite[Table 3.1]{Jansen1}.
However, the accuracy for large length is unknown.
Proposition \ref{propNum} and Theorem \ref{Numb} in this paper present a theoretical result on the number of finite-length binary sequences with length $n$ and nonlinear complexity $c_n$.
 The exact value of
 $|Z_{2}(n,c_{n})|$ with $\frac{n}{2}\leq c_n\leq n-1$ for a very large integer $n$ can be calculated, as long as the standard factorization of each integer less than or equal to $n-c_n$ is known. The exact probability that a random binary sequence of length $n$ achieves the nonlinear complexity $c_{n}$ is thereby obtained by
  $Pr[nlc(\textbf{s}_{n})=c_n]=\frac{|Z_{2}(n,c_{n})|}{2^{n}}$ for each $c_n$ with $\frac{n}{2}\leq c_n\leq n-1$.
\end{remark}

\section{Conclusion}\label{Sec6} In this paper, we proceeded with theoretical investigation of
binary sequences with nonlinear complexity not less than half of the length. Based on the
structural properties, we provided, for the first time, a direct construction of binary sequences
with given length $n$ and nonlinear complexity $c_{n}\geq n/2$. Moreover, a
formula was established to calculate the exact number of these sequences.
An interesting future work would be the extension of the design techniques to the non-binary case.
This problem deserves further research.

\section*{Acknowledgment}
The work of Liang, Zeng and Sun was supported by the National Natural Science Foundation of China (NSFC) under Grant $62072161$, and by Application Foundation Frontier Project of Wuhan Science and Technology Bureau under Grant $2020010601012189$.
The work of Xiao was supported by the National Natural Science Foundation of China under Grant 12061027.

\end{document}